\documentclass[USenglish, cleveref, autoref, thm-restate]{llncs}
\usepackage{xurl}
\usepackage{amsmath, amssymb, amsfonts}
\usepackage{thmtools}

\usepackage{setspace}
\usepackage{comment}
\usepackage{float}
\usepackage[hidelinks]{hyperref}
\usepackage{bookmark}
\usepackage[utf8]{inputenc}

\usepackage[T1]{fontenc}
\usepackage{graphicx}
\usepackage{mathtools}
\usepackage{cleveref}
\usepackage{subcaption}
\usepackage{algorithm,algorithmic}
\usepackage{tikz,pgfplots}
\pgfplotsset{compat=1.18}
\usepackage{xcolor}
\usepackage{graphicx}
\declaretheorem[name=Proposition]{prop}
\crefname{prop}{Proposition}{Propositions}
\Crefname{prop}{Proposition}{Propositions}

\begin{document}

\author{
    Yunqi Zhang \and 
    Shaileshh Bojja Venkatakrishnan 
}


\institute{The Ohio State University\\
\email{\{zhang.8678, bojjavenkatakrishnan.2\}@osu.edu}}

\title{
Rethinking Collaborative Trust for Verifiably Decentralized Blockchain Systems}
%
%
%
\maketitle              

\begin{abstract}
Despite the promise of decentralization, measurement studies have identified a conspicuous lack of decentralization in blockchains. 
Centralization has been observed in almost all layers of the blockchain, in decentralized applications, and in decentralized autonomous organizations. 
In many cases, it is practically impossible to definitively determine the extent of centralization in the system. 
While multiple works have proposed methods to decrease centralization, by and large blockchains continue to be significantly centralized. 

In this paper, we develop a general framework for building verifiably decentralized blockchain systems. 
Our framework is motivated by the core observation that the richness and diversity of collaborative interactions between users---rather than resource uniformity---captures the essence and extent of decentralization in a blockchain system. 
Existing blockchains do not have any incentive mechanisms to encourage inter-coalition collaboration, which directly contributes to centralization.
We propose a novel reward design that incentivizes users to collaborate with other users without forming isolated coalitions. 
Technically, our method uses a Sybil-resistant asymmetric Shapley value for reward attribution within a collaboration group, and the theory of expander graphs for measuring and enforcing decentralization. 

Our framework is general and can be adapted to alleviate centralization in any layer, application, or decentralized organization. 
It also has important implications beyond the topic of centralization.
For example, we show that our solution can naturally address the blockchain scalability problem. 
We also identify a new class of decentralized collaborative applications that have hitherto been unexplored in blockchains. 
\end{abstract}
\section{Introduction}

In its more than 15 years of existence, blockchains have matured significantly as a flexible, open, transparent, and secure platform for hosting a wide array of decentralized applications~\cite{nakamoto2008bitcoin}. 
Compared to the first blockchain---Bitcoin---today's blockchains excel at achieving superior transaction throughput, lower confirmation latency, support for diverse applications, more efficient resource consumption and sharing, and improved community governance.  
This has been possible in large part due to the intensive research from both academia and the industry, which has resulted in innovations in new consensus protocols, layer-2 scaling methods, cryptography, incentive design, networking, and beyond~\cite{gervais2016security,bonneau2015sok}. 
Today blockchains have a market capitalization exceeding 2 trillion dollars, and millions of daily users worldwide. 

Despite the stellar progress, blockchains remain a niche technology enthusiastically adopted by a small section of the consumer population and mostly ignored by the rest~\cite{blockchainadoptionstats}.
The reasons for this behavior are complex and multi-faceted;
the poor scalability of blockchains is often cited as one of the main technological reasons for its practical limitations. 
However, in many blockchains adoption remains poor (relative to centralized counterparts) even after significantly improving scalability, cost, and performance efficiency~\cite{cloud2026}. 
A noteworthy example is from the domain of decentralized physical infrastructure networks (DePINs). 
Due to its crowd-sourced design, many DePINs provide resources---including essential resources such as compute, storage, network, etc.---to end-users at a cost that is a fraction (e.g., 1/10-th) of the cost of the same resource from a centralized provider~\cite{wu2025degrees}.
And yet, it is the centralized services that enjoy the larger user base across the board and by far~\cite{depin2026}.
In addition to protocol improvements, the blockchain industry has also innovated on novel applications. 
E.g., due to the growing demand for AI, in recent years a number of blockchains have focused (or, in some cases, pivoted to) on providing decentralized AI services.
Time and again such innovations are touted as the next ``killer app'' for blockchains. 
Whether true or not, the remarks seem to suggest that  perhaps a ``killer app'' for blockchains does not exist as of yet~\cite{privacykillerapp26,stablecoinkillerapp,frameskilerapp}.  

In this paper  we present a novel blockchain incentive paradigm that enables new types of decentralized applications while improving existing applications. 
At the heart of our proposal is a fundamental redefinition of what    ``decentralization'' means in blockchains. 
It is widely accepted that blockchains are transparent, secure, and decentralized. 
Transparency is by design: all blocks are publicly visible and can be verified by anyone. 
Many blockchains also provide strong mathematical guarantees for the security of their systems.
However, the third property---decentralization---is tricky to verify or provide guarantees for.
It is fundamentally challenging (and practically impossible) to verify whether a group of nodes is colluding.
This is true even if there are strong signals indicative of decentralization such as:  the network is open and permissionless, there are thousands of worldwide nodes running on independent infrastructure providers, key governance decisions are made democratically, 
the amount of resource (such as stake) owned by any one node is a tiny fraction of the overall resources available on the chain,
historical measurements show a lack of distinct clusters on the network topology, and there are no  catastrophic chain reversal or other fatal attacks.  
Therefore, decentralization in blockchains today is really more of a commonly held {\em belief} (or, not!) than a verifiable property. 

In blockchains, collusion is often viewed as an undesirable behavior that enables attacks detrimental to chain security~\cite{meaningofdecentralization}. 
Collusions are also at the heart of why blockchains become centralized. 
We focus on a particularly important type of a collusion in which a coalition of nodes---over time---pool their resources, actively collaborate, improve service efficiency, and slowly dominate the marketplace (e.g., the block mining market) making it difficult for smaller players with fewer resources to compete. 
We claim that this commonplace phenomenon---which we term {\em ossification}---is a dominant factor in determining how decentralized a blockchain is.  
Ossification is rampant in today's blockchains. 
A familiar example is the emergence of mining pools, wherein nodes have an incentive to join a pool and lower reward variance rather than operate  alone. 

Ossification is beneficial for an economy outside of blockchains. 
It is the process by which centralized organizations form partnerships, invest in the partnerships, and over time grow to become industry behemoths. 
To ossify is the natural driving force in an economy. 
It leads to streamlining of processes, improving efficiency, leveraging the economy of scale and network effects to provide services (or, goods) at a lower cost.
However, in blockchains ossification is a centralizing force that must be resisted. 
Unfortunately, today's blockchains have no mechanisms to resist ossification and have largely succumbed to the ossifying forces across the protocol stack. 

We define a decentralized system as a system where there is no incentive for any subset of nodes to ossify. 
To make a blockchain system ossification-resistant requires new incentive mechanisms beyond what is available in existing blockchains. 
We propose a novel reward mechanism in which nodes are rewarded for {\em collaboratively} performing tasks (such as mining a block, or providing an application's service) with a diverse set of users.
The reward mechanism is Sybil resistant: a party cannot pretend to  collaborate with others when it is really just collaborating with itself. 
Importantly, our mechanism encourages collaboration with diverse entities over time while penalizing static collusion sets. 
By defining collaborations as a rooted directed acyclic graph (DAG), we compute rewards as an asymmetric Shapley value with carefully chosen DAG-dependent weights to provide Sybil resistance.
Unlike existing blockchains where collaborations (e.g., committee assignments) are algorithmically computed, we leave the choice of collaboration and collaborators as a subjective decision made solely by the users.  
Any faults in the provided service leads to the slashing of only the collaboration initiator and not the individual collaborators. 

The implications of our mechanism design are many. 
It is naturally ossification-resistant as a  group of nodes that consistently collaborate only with each other and not with other nodes will be penalized.
Without ossification, even small players (e.g., nodes with a relatively low stake) can compete by forming collaborations with other nodes. Moreover, it is in the interest of the nodes to be as inclusive as possible in their collaborations. 
Instead of an economy of competition which leads to centralization, we can have an economy of inclusiveness, collaboration, and community. 

Forbidding ossification does not mean innovation and efficiency within the system would stagnate.  
Nodes still compete with each other which encourages innovation, except now all the other nodes are also involved in the competition via collaborations. 
However, we highlight that there are important differences in the type of services suitable for an unossified economy compared to an ossified one.  
In a successful ossified coalition, participants have deep-rooted trust relationships with each other, lower communication barriers, and highly optimized service methods  developed from years of experience in the domain.
Conversely, services where efficiency, quality or cost is important may even require  ossification.
Such services, we claim, are therefore unsuitable for implementation on blockchains. 
E.g., a cloud service implemented on a blockchain will not be able to match the efficiency of centralized cloud providers, unless there is some ossification in the blockchain itself (e.g., in a layer-2 used by the service).
A caveat is such services can still be suitable for blockchains if the primary value comes not from the efficiency of the service but from transparency, security, and decentralization of the service.
That is to say, e.g., if a less efficient but transparent, secure, and decentralized cloud is valuable for certain use cases it may well be implemented on a blockchain. 

On the other hand, services where the human qualities of trust, creativity, relationship and  connection, common sense, intelligence, morality and ethics, empathy etc.---to name a few---are important can benefit from implementation on an unossified blockchain. 
E.g., a service where users can  query, interact, or obtain services from coalitions of experts around the world in a certain domain can benefit as a blockchain application.\footnote{The word expert is used loosely to mean a person or an organization that is skilled at providing a certain type of good or service.} 
As a simple example, we can have a service where patients interact with and obtain wholistic advice from a collaborative group of doctors from various specialties, rather than having to interact with one doctor at a time through time-consuming referrals.  
As before, such services can be implemented in an ossified manner as well. 
But an unossified service offers unique properties such as lower subjective bias, a higher trust and radical creativity, which are hard to achieve in an entrenched centralized organization. 
For instance, many Web-2 platforms exist today that allow end users to connect with and obtain services from domain experts.\footnote{Examples include Amazon, Fiverr, tele-health platforms etc.} 
But such interactions are often with individual experts or an  ossified team of experts, and seldom with a ``random sample'' of experts.  
This is because a centralized platform 
provides no incentives for experts to collaborate with a diverse set of other experts.  
We note that the type of services we claim as being suitable for blockchains are not necessarily new---many social network blockchains already exist today; trusted execution services have been the hallmark of blockchains since the beginning; many blockchains and dapps exist for purchasing creative digital media or art.   
However, even if the application is not new, our anti-ossification incentives can create a fundamentally different incentive structure for collaborating service providers and consequently a rich, human service experience for its end-users.

A secondary benefit of our proposed design, is that it naturally solves the scalability problem of blockchains. 
By collaboratively creating a block, a group of miners can process a much greater number of transactions compared to publishing the blocks solo. 
The design that emerges is different from existing scaling solutions like sharding, or DAG-consensus methods. 
In sharding too a group of miners (from different shards) process transactions in parallel to increase throughput.  
However, a miner is algorithmically assigned to a shard which creates a forced collaboration. 
E.g., a miner that has previously misbehaved (and, perhaps slashed as a result) can continue to be assigned to shards if it has sufficient stake. This creates a situation where honest miners are forced to collaborate with a known miscreant. 
In contrast, miners subjectively choose their collaborators in our proposed design. 
A known miscreant node is unlikely to be voluntarily chosen for a collaboration by an honest node. 
Changing identity by adopting a fresh public key and transferring stake does not help the miscreant. 
Our proposed method assigns an {\em importance} score to each public key. 
The importance is a measure of how much a node has collaborated with a diverse set of nodes in the past.  
Unlike tokens, importance is algorithmically earned over time and cannot be transferred between accounts easily. 
A miner can, among other factors, consider the importance of a node while choosing collaborators to avoid miscreants. 
Similar to sharding, in our proposed method a resource constrained node can verify the full blockchain only with the help of other collaborators or more capable nodes. 
It is also possible to reduce the confirmation latency of transactions. 
By letting nodes work in coalition groups, we can effectively reduce the number of votes necessary to confirm transactions, thus improving latency. 

In summary, we make the following contributions in this paper: 
\begin{enumerate}
\item We propose a fundamental redefinition of decentralization in blockchains as the extent of collaborative interaction that happens between diverse sets of users in the network,  regardless of how skewed resources are distributed across the users. 
\item Using block proposal as a concrete example, we provide an incentive mechanism that encourages decentralization. 
Under this scheme, we show that nodes do not have an incentive to form coalitions. 
\item We show how blockchain scalability can be improved naturally using our proposed framework. 
\item We provide discussions on extending our framework to other applications, such as decentralized autonomous organizations and smart contracts.  
\end{enumerate}

\section{Centralization in Blockchains}

Centralization is a property of the state of a blockchain system at a particular instant in time. 
It refers to the concentration of a resource essential for system operation---such as wealth, network infrastructure, or voting power---on the hands of a small number of users. 
A number of measurement studies have identified centralization occurring in multiple layers of the protocol stack of Bitcoin and other blockchains~\cite{sapirshtein2016optimal,gervais2014bitcoin,sai2021taxonomy,azouvi2018egalitarian,beikverdi2015trend,gencer2018decentralization,beikverdi2015trend}.  
These layers include, but are not limited to,  consensus~\cite{lewenberg2015bitcoin,beikverdi2015trend,10.1007/978-3-319-67816-0_18}, network~\cite{neudecker2018network,tapsell2018evaluation,roubini2018blockchain,feld2014analyzing,apostolaki2017hijacking}, wealth~\cite{chohan2022cryptocurrencies,fanti2019compounding}, governance~\cite{beck2018governance,hsieh2017internal,atzori2015blockchain}, geography~\cite{mao2023less,kim2018measuring}, exchanges~\cite{bohme2015bitcoin}, and equipment~\cite{8314142,8767923,raman2018dynamicdistributedstoragescaling,Ekblaw2016BitcoinAT}. 

The prevailing sentiments about centralization in blockchains are: (1) centralization is undesirable as it does not align with the core ethos of blockchains, (2) centralization eases the possibility of collusion attacks thereby weakening  security, and (3) centralization may not be completely avoidable.  
For example, in Kwon {\em et al.}~\cite{kwon2019impossibility}, the authors define a mathematical notion of decentralization based on how evenly resource is distributed across the users. 
They state that collusion attacks are more difficult if the resource power is more evenly distributed.
A key result of the paper is that a fully decentralized system is impossible when there is no ``Sybil cost'', i.e., when there is no extra cost for Sybil users to operate additional nodes.
The paper argues that public blockchains today do not have a Sybil cost, and therefore are susceptible to centralization. 
The problem of how to achieve a positive Sybil cost without relying on  a trusted-third party for identity management is left as an open question. 


To remedy the centralization problem, researchers have proposed many techniques such as using a non-linear function (e.g., square root of stake) for computing  voting power~\cite{motepalli2025decentralization}, ASIC-resistant hash functions in proof-of-work chains~\cite{cho2018asic}, quadratic voting in DAOs~\cite{dimitri2022quadratic}, and reward sharing schemes for fair formation of stake pools~\cite{brunjes2020reward}. 
In a way, all of these methods are techniques by which the symptoms of centralization can be managed. 

In this paper, we take a complementary approach to centralization prevention. 
We identify the core root-cause process by which centralization occurs, and propose mechanisms to stop the process. 
At any given time instant, a blockchain may be centralized due to multiple reasons: (1) the network could have strong centralization at genesis, which carries over through time; (2) entities external to the network (e.g., the government, law enforcement agencies, or a major ISP) may force certain users to have a disproportionately high (or, low) amount of resources and create centralization; (3) centralization can happen because economically it is the ``best'' strategy for the users. The first two reasons  are beyond our control. 
We focus on the third case.

\subsection{Centralization via Ossification}

In a free market, the objective of any seller is to provide a good or service of the highest quality at the cheapest possible price.\footnote{Technically the objective of a seller is to maximize profits, which can be done by improving product quality while reducing cost.}
Achieving a high product quality at a low price is challenging: it requires the manufacturing or the service process to be as {\em efficient} as possible which occurs when the seller invests money, time, and effort into optimizing the manufacturing or service pipeline.  
In many cases, providing competitive service is only possible through a coalition of sellers rather than by an  individual.\footnote{We use the word seller as a general term to mean members of the service providing coalition.} 
A coalition may be preferred if the service requires a diverse set of skills or expertise from different domains, which may be difficult for a single individual to have. 
A coalition may also be preferred to increase profits through economy of scale or reduce loss through risk pooling.  
As with individual sellers, to improve service efficiency in a coalition its members invest capital, time, and effort.\footnote{Capital can also come from external sources (e.g., investors). Similarly, service can involve effort from external bodies (e.g., contractors). We do not consider these as part of the coalition.}   
Today's markets are full of coalitions, in the form of partnerships, companies, corporations, and cooperatives. 

We identify three essential properties of coalitions in free markets (and in many mixed markets existing today). 
\begin{property} \label{property 1}
The coalition forms because a coalition is necessary to provide a competitive service in the market and/or due to economic factors. 
The coalition may upsize or downsize in reaction to market changes. 
\end{property}
\begin{property} \label{property 2}
Members of the coalition collaborate over time pooling effort and capital to increase service efficiency and product value.     
\end{property}
\begin{property} \label{property 3}
Members of the coalition rarely collaborate with competing coalitions in the market for providing service.     
\end{property}

This process by which coalitions form and are sustained represents a key centralizing force in today's economy.
Smaller coalitions can also be present internally within a larger coalition. 
E.g., a large company which is itself a coalition may internally consist of smaller coalitions in the form of divisions or departments.   
However, in this case the smaller coalitions arise not due to market competition but due to the internal organization and division of labor of the larger company.
Furthermore, there is typically at least some (if not, a lot of) collaboration between the smaller coalitions (divisions) in such a case. 
Therefore, Property~\ref{property 3} above is not true for these coalitions. 
We are primarily interested in coalitions that arise due to market forces, and which satisfy the Properties~\ref{property 1},~\ref{property 2}, and~\ref{property 3}. 
We use the term {\em ossification} to denote the process by which competing coalitions form in a market (Fig.~\ref{fig:ossification}).\footnote{The term is inspired by network ossification, which denotes the slow rigidification of the Internet architecture and protocols.} 
\begin{definition}
Ossification is the process by which a coalition forms and is sustained in a market while satisfying Properties~\ref{property 1},~\ref{property 2}, and~\ref{property 3}. 
\end{definition}

\begin{figure}[!tb]
    \centering
    \includegraphics[width=.9\textwidth,]{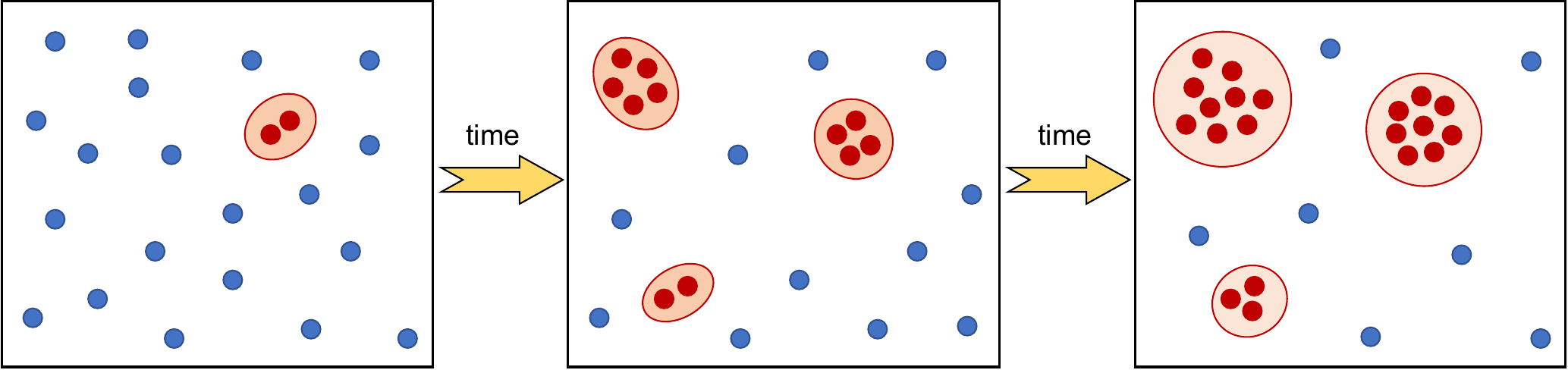}
    \caption{Sellers in a market have a natural tendency to ossify over time due to economic forces. Each blue dot represents a seller. The red clusters represent coalitions formed due to ossification.}
    \label{fig:ossification}
\end{figure}

The word ossification is justified because members of a coalition develop strong trust relationships with each other over time as they collaborate to compete in the market. 
Internal processes, communication pathways, overheads etc. are streamlined, and the coalition functions as a well-oiled machine. 
The strong collaboration among members over a long time leads to highly efficient products or services (extremely high quality at low cost), that is difficult to match by other smaller coalitions or coalitions that have not been around for as long. 
Many of the services and products that we enjoy today---smartphones, cloud services, healthcare facilities, supermarkets, airplanes etc.---are all the result of ossified coalitions. 
One could argue that ossification is, in fact, necessary to some extent to achieve the level of expertise needed to develop a complex service or product (such as the smartphone, or the cloud).  

Ossification is generally viewed as a favorable process, so long as it does not lead to monopolies and done in the interest of consumers. 
However, ossification has some drawbacks as well. 
As members of an ossified coalition actively collaborate to provide the best possible service, they typically do not collaborate with competitors in the same market.\footnote{Collaborations between competitors do occur on mutually beneficial issues. But even in those cases collaboration on the actual end-product on which the companies are competing is rare.} 
At first glance, this is almost a trivial statement. 
Why collaborate with a rival against  whom one is directly competing? 
From the point of view of the consumers, a collaboration between coalitions holding diverse viewpoints can potentially result in effective cross-pollination of ideas, lower bias, and produce creative breakthroughs that otherwise do not happen. 
In an ordinary market, the pressure on a coalition is to achieve the highest efficiency and earn the most amount of profits. 
This focus on a singular goal can cloud the coalition from exploring avenues that do not contribute to increasing efficiency in obvious ways. 
Coalitions that do focus on exploratory topics tend to be really affluent and constitute a minority. 

Not all products and services are a result of ossified coalitions. 
For example, in the movie industry (e.g., Hollywood), rarely do we see the same cast and crew collaborate over multiple projects. 
The creative foundations needed to make a movie are gained by encouraging collaborations between different sets of people for different movies. 

As in other markets, ossification happens in blockchains as well. 
The examples of centralization in blockchains cited at the beginning of this section---such as the formation of mining pools, or the centralization of voting power in DAOs---can be attributed directly or indirectly to ossification. 
It is, therefore, valuable to devise methods that discourage or eliminate ossification in blockchains. 

\section{Decentralization in Blockchains}

The word decentralization is often used in blockchains to mean the absence of centralization. 
Just as centralization is a function of time and can happen due to many reasons, it follows that decentralization is also a function of time and can happen due to any of those reasons. 
We argue that this is not a very useful definition for decentralization.
Consider a fully homogeneous proof-of-work blockchain network in which all nodes have the exact same amount of resources and uniform all-to-all network connectivity. 
Suppose a fraction (say, 30\%) of the nodes form a coalition and engage in selfish mining. 
Intuitively, the resulting network is not fully decentralized. 
However, by all resource measures the network appears to be fully decentralized. 
We conclude that an accurate definition of decentralization must consider not only the resource distribution across nodes, but also the intention of nodes to collaborate with other nodes in the network.  
We formalize this idea below. 

Consider a set of {\em free agents} in a market, where a free agent is an entity (a person, a group, or an organization) capable of making a decision of their own free will. 
We do not require the decisions that free agents take to be rational; but it is important that the decisions are taken by the free agents out of their own accord and not forced upon them (e.g., by a regulator, or by an algorithm in the case of blockchains). 
We assume agents are free to choose who they want to collaborate with---a property which we call as {\em freedom of choice}.  
Even if the agents can make decisions of their own, in many cases the agents don't have freedom of choice. 
E.g., geopolitical reasons can prevent trade between companies in certain regions of the world; or, in a sharded blockchain the specific shard that a node is assigned to (i.e., its collaborators) is algorithmically computed and outside of the control of the node. 
Our definition of decentralization is applicable for free agents that have freedom of choice. 
\begin{definition} \label{defn: decentralization}
A market with free agents having the freedom of choice is decentralized, if for any subset of free agents  there is a `significant' amount of collaboration between the subset and its complement. 
\end{definition}

\begin{figure}[!tbp]
    \begin{subfigure}[t]{0.45\textwidth}
        \centering
        \includegraphics[width=.8\textwidth]{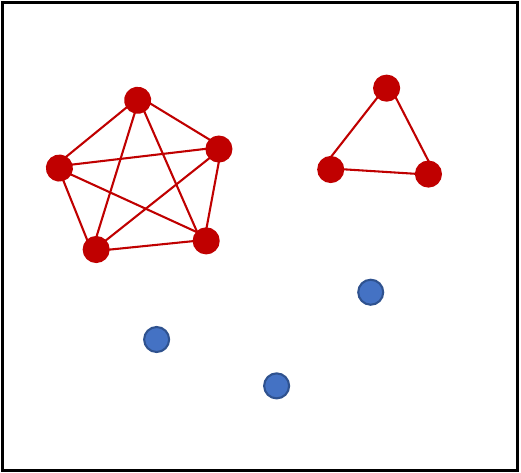}
        \caption{Ossified market.}
        \label{fig:collab1}
    \end{subfigure}
    \hfill
    \begin{subfigure}[t]{0.45\textwidth}
        \centering
        \includegraphics[width=.8\textwidth]{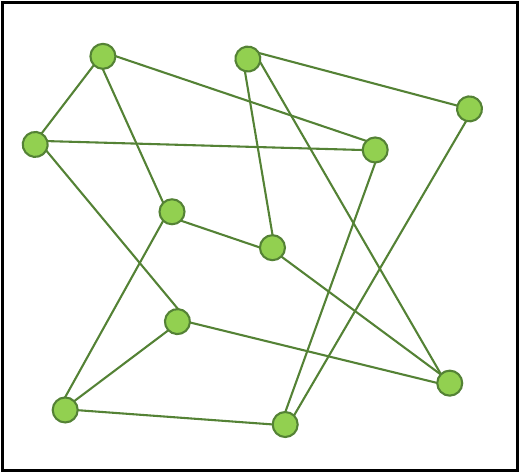}
        \caption{Decentralized market.}
        \label{fig:collab2}
    \end{subfigure}
    \caption{The collaboration graph in an ossified (left) and decentralized (right) market. Each dot is a seller, while the edges represent the presence of interaction between users. In (a), the red clusters are the ossified coalitions.}
    \label{fig:collab}
\end{figure}

The definition naturally forbids ossified coalitions where members collaborate only with other members of the coalition, and not with others. 
What `significant' amount of collaboration means is application dependent. 
We provide a more precise definition for the block proposal application in \S\ref{sec:blockunossify}. 
In short, we define decentralization not as the absence of centralization but as the presence of voluntary, value-adding interactions and collaborations between participants (Fig.~\ref{fig:collab}). 
There are some similarities and differences between the old definition of decentralization and Definition~\ref{defn: decentralization}. 



\smallskip 
\noindent
{\bf Similarities.} 
Just like the old definition, our new definition is also a continuous measure of decentralization. 
A system can be 0\% decentralized, 100\% decentralized, or fall anywhere in between. 
Our definition is not precise and can be implemented in multiple ways in practice. 
In a fully decentralized market, the ``collaboration graph''---computed as the pairs of free agents that have collaborated during the time horizon of interest---must look like an expander graph. 
If not, measuring to what extent the collaboration graph resembles an expander graph provides the extent of decentralization in the network. 
In certain cases, if the network is small, it is possible for each free agent to collaborate with all the other free agents 
within a reasonable amount of time. 
We do not place any constraints on the sparsity or maximum degree of the collaboration graph. 
Depending on the application, this constraint may be added. 

\smallskip
\noindent
{\bf Differences.}
In the old definition, the extent of decentralization can be measured at any given time instant. 
In our new definition, we can measure the extent of decentralization only over a time horizon (like ossification). 
Our definition is applicable only to systems that have human (or, organizations controlled by humans) participants. 
We cannot use the definition for a  network of machines that don't have free will (e.g., a private blockchain managed by a single entity).  
Our definition is complementary to the old definition. 
For instance, we can consider a network where all the nodes are highly collaborative, but the resource distribution is highly non-uniform. 
Such a network is centralized per the old definition, but decentralized according to our definition. 
It is possible to develop a hybrid definition that considers both resource distribution and collaboration. 
However, the utility of the old definition is limited to just quantitatively measuring how resources are concentrated in the network. 
The definition does not naturally suggest any mechanisms by which the resource concentration can be mitigated. 
Definition~\ref{defn: decentralization}, on the other hand, is more general in the sense that it not only provides a measure to quantify decentralization but also presents a natural method to  prevent centralization from happening in the first place via the typical process of ossification (as we will see in the coming sections).\footnote{Atypical processes of centralization cannot be avoided even by our scheme.} 
The old definition is a worst-case measure. 
If resources are concentrated in the hands of a few, those few {\em could} attack the system, not that they would.
Since we cannot predict the behavior we err on the side of caution, and require all resources to be uniformly distributed. 
Our definition is a typical-case measure. 
Even if resources are concentrated at the hands of a few, if those few are willing to collaborate with everyone in the network, we feel there is no need for worry. 
In the worst case, collaborations can abruptly stop and the nodes in power can turn evil. 
But in the typical case, so long as the nodes do not have an economic incentive to turn evil, we believe they will not turn evil. 

\smallskip
\noindent
{\bf Discussion.}
An important aspect in Definition~\ref{defn: decentralization} is that the collaborations have to be made voluntarily by the nodes, and not forced by external factors. 
In other words, we observe when given a choice who the nodes choose to work with. 
If a certain group of nodes form a clique and choose to work only with other members of the group, we call the network as centralized. 
If each node voluntarily collaborates with all other nodes, we call the network as decentralized. 

The idea of encouraging greater collaboration to combat monopolies, encourage creativity, reduce bias, or to enforce a desired shared culture has been voiced by many experts from various disciplines in the past~\cite{demisinterview,doudna2020viable,vitaliksanctuary,torvalds2001just,farmer2004pathologies,berners1999weaving,ninessop}. 
The focus of this paper is to formalize this idea into a mathematical framework, which can then be used for decentralizing blockchains.  

In the following, the word decentralization means decentralization as per Definition~\ref{defn: decentralization}. 
Where there is ambiguity, we call our definition as ``collaborative'' decentralization and the old definition as ``resource'' decentralization. 

\section{System Model and Problem Statement}

The ideas we have presented thus far are general and can be applied to any layer of the blockchain protocol stack (we hypothesize there are applications beyond blockchains as well, where this could be applied).
However, for concreteness we first focus on a key application area: the consensus protocol. 

\smallskip
\noindent
{\bf System model.} We consider a blockchain system comprising of a set of nodes $V$, with $n = |V|$. 
Assuming a proof-of-stake (PoS) system for concreteness, 
each node $v \in V$ has a stake of $s_v \geq 0$ with $\sum_{v \in V} s_v = 1$. 
Time proceeds in discrete rounds, $t=0,1,\ldots$, with one block published each round. 
Let $V[t] \in V$ be the proposer assigned to be the block publisher for round $t$. 
The probability that $V[t] = v$ for a node $v \in V$ at time $t$ is $s_v$. 
$R[t]$ is the block proposer reward that the proposer $V[t]$ gets at round $t$ for publishing the block.
In addition to the block proposer reward, the proposer also receives transaction fees as a reward.
We do not model transaction fee rewards for the time being. 
A fraction of the nodes with a total stake of up to $1/3$ may be malicious in the system. 
Each node has a public-private key pair. 
Digital signatures are secure. 

We assume the block size can be arbitrarily increased in size to accommodate any additional metadata required by a proposed algorithm, without hurting the block dissemination time. 
We also assume nodes have unlimited compute capacity. 
In later sections, we discuss the actual overhead introduced by our proposed method and discuss possible optimizations for practice. 

\smallskip 
\noindent
{\bf Problem statement.}
We seek to design a block reward computation mechanism for $R[t]$ such that there is no incentive for ossification. 
That is, for any subset $S \subset V$, if the nodes in $V$ pool their resources (stake),   it must result in a lower reward for each node compared to following the  protocol.  
In today's blockchains, the expected reward for a node is identical whether or not it joins a stake pool. 
It is only the variance in reward that decreases upon joining a stake pool. 
What we require in our problem is the expected reward for a node should be strictly lower if it joins a stake pool, compared to not joining a pool (more precisely, ossifying with a pool). 

\section{Block Production and Reward Mechanism} \label{sec:blockunossify}

\subsection{Overview}
We present a method by which any PoS consensus protocol can be extended to allow a block to be jointly proposed by a subset of nodes, instead of a single node.
Our extension requires a real-valued state for each node, a global state capturing the collaboration patterns between the nodes, a new field in the block header, and a modification of the block reward computation mechanism. 
It does not alter the core consensus protocol or diminish its security. 

Before a block is published, the block publisher invites other nodes of its choice to collaborate on the block publication. 
When the block is published, the signatures of all the collaborators are included within the block header in addition to the block proposer's signature. 

We introduce a new real-valued state associated with each node called {\em importance}, that measures the amount of collaboration a node has done over time. 
When a node collaborates with a publisher, a part of the node's importance goes to the publisher when the block is published. 
Unlike payment tokens, importance can be transferred only via collaboration and that too at a slow, fixed rate. 

A constant-rate ``tax'' on the nodes decays their importance slowly.  
A node can increase its importance when it is a block publisher by collaborating with other nodes.
A node can also increase its importance through a ``tax refund'' if it is well-behaved and has suffered a loss in importance due to collaboration. 
The amount of importance received from the tax refund depends on the importance of the nodes it collaborates with, and the overall collaboration graph of the network.  

The publisher of the block receives a reward that is proportional to the total importance of all the collaborators (including the publisher) in the block. 
The publisher may choose to allocate a portion of this reward to its collaborators. 
If the block produced is incorrect, only the publisher is slashed. 
The collaborators do not get slashed. 
We explain these ideas in more detail in the following. 
Alternative solutions are possible by considering variations of these ideas. 

\subsection{Importance}

Importance is a real-valued state associated with each node that tracks the extent to which the node has collaborated with other nodes. 
For a node $v \in V$, we let $I_v[t] \geq 0$ be the importance of $v$ at the beginning of round $t$. 
Importance is a conserved quantity. It can neither be created nor destroyed. 
It can only be transferred to or received from another node. 

\smallskip
\noindent
{\bf The collector node.}
In addition to the nodes $V$, we introduce a hypothetical node $c$ which we call the collector. 
The collector node has its own associated importance $I_c[t] \geq 0$ at the beginning of round $t$. 
The collector node collects a tax in importance from all the nodes at each round. 
For a system parameter $\beta \in (0, 1)$, the tax collected from any node $v \in V$ at round $t$ is $\beta I_v[t]$. 
The collector also gives importance back to the network in the form of a tax refund awarded to certain well-behaved nodes. 
The exact amount awarded and who it is awarded to depends on the collaboration patterns of the nodes, and will be discussed later. 
If a node is idle, i.e., it is not a frequent publisher (due to low stake) and it is not a frequent collaborator, the node will eventually lose all of its importance to the collector.
If a group of nodes form an ossified coalition, with each node in the group only collaborating with other nodes in the group, the aggregate importance of the entire group goes to zero. 
We set the total amount of importance in the network (including the collector) to 2.
So, we have the invariance $\sum_{v \in V} I_v[t] + I_c[t] = 2$ for all time $t \geq 0$. 
We also set $I_c[0] = 1$. 

\subsection{Collaboration DAG}

Before a block at round $t$ is published by $V[t]$, the publisher negotiates with other nodes of its choice to form a collaboration set $S[t] \subseteq V$ with $V[t]\in S[t]$. 
The reason it is a negotiation will be clear shortly. 
The collaboration set is organized as a directed acylic graph (DAG) $D[t]$ with the vertices being the nodes in $S[t]$ and $V[t]$ as the root. 
This DAG $D[t]$ is published in the block header at time $t$, including signatures of the members in $S[t]$ attesting to the DAG structure. 
The block reward $R[t]$ earned by the publisher $V[t]$ is set as
\begin{align}
R[t] = \sum_{v \in D[t]} I_v[t](1-\beta),
\end{align}
i.e., the block reward equals the total importance of the collaboration set after paying tax to the collector. 
The block reward is awarded only to the publisher $V[t]$. 
Each non-root node in the DAG $D[t]$ pays an importance fee to all of its predecessors in the DAG, and receives an importance free from its successors in the DAG.  
For each $v \in D[t]$, let $I'_v[t]$ be the importance of $v$ after sending importance to its predecessors and receiving importance from its successors.
We design the importance transfer mechanism such that it satisfies the following properties. 
\begin{property}[Conservation of importance] \label{prop:convimp}
The total importance of the nodes in the DAG before and after the transfer is the same, i.e., $\sum_{v \in D[t]} I'_v[t] = \sum_{v \in D[t]} I_v[t](1-\beta)$. 
\end{property}
Therefore, a collaboration among Sybil nodes---regardless of the DAG structure---cannot increase the total amount of importance of the Sybils. 
\begin{property}[Slow transfer of importance] \label{prop:slowtransf}
For a DAG with a total stake of $s$ and a total importance of $i$, the maximum amount of importance a node can receive in the DAG is $\alpha i s / \epsilon$,  
where $\epsilon$ is the smallest denomination of the blockchain's native token and $\alpha \in (0,1)$ is a system parameter governing importance transfer within the DAG (see Appendix~\ref{apx:imptransfDAG}). 
\end{property}
Importance captures the {\em long term} collaboration behavior of a node and is, therefore, slow varying. 
This allows nodes to form stable collaboration relationships with peers. 
From a security standpoint, a slow varying importance increases the cost of certain attacks for the Sybils. 
E.g., if a Sybil node forges an incorrect signature during collaboration causing the publisher to get slashed, it cannot immediately assume a new identity with the same stake and importance as before. 
\begin{property}[Fairness] \label{prop:fairness}
Let $S \subset D[t]$ be any path-closed subset of nodes containing $V[t]$, i.e., for any $u, v \in S$ and for any path from $u$ to $v$ in $D[t]$, the path is contained in $S$. 
We have
\begin{align}
\sum_{v \in S} (I'_v[t] - I_v[t](1-\beta))  &\geq 0 \geq \sum_{v \in D[t]\backslash S} (I'_v[t] - I_v[t](1-\beta)). 
\end{align}
For any node $v \in D[t]$, the final importance value of $v$ is lower bounded as 
\begin{align}
I'_v[t] &\geq I_v[t](1-\beta)(1 - \frac{\alpha}{\epsilon}\sum_{u \in Q(v)} s_u).  
\label{eq:fariness2} 
\end{align}
\end{property}
The closer a node is to the root of the DAG, the more importance it receives from the nodes below it. 
Nodes close to the sink(s) of the DAG lose importance to the nodes above them. 
The root of the DAG receives the most importance. 
Such a mechanism not only provides the root with an incentive to gather a large collaboration set, but it also gives each node of the DAG an incentive to find more collaborators of their own.
Thus, the collaboration set itself can be collaboratively (and hierarchically) built, which reduces overhead on the root and allows for rapid construction of large collaboration sets. 
For many applications, it may suffice to just have a root and one level of child nodes below it. 
We use a DAG as it is general and can support a flexible range of applications. 
\begin{property}[Sybil resistance] \label{prop:sybil}
Consider any node $v \in D[t]$. 
Suppose $v$ is replaced by any DAG $A[t]$ with total stake $s_v$ and total importance $I_v[t](1-\beta)$.
The total fraction of importance received by $A[t]$ exceeds the fraction of importance received by $v$ in the original DAG by a factor that is at most 
\begin{align}
    1 + \frac{\frac{\alpha^2}{2}(\frac{s_v^2}{\epsilon^2} - \frac{s_v}{\epsilon}) + O(\alpha^3)}{(1-(1-\alpha)^\frac{s_v}{\epsilon}))},      
\end{align}
which goes to 1 as $\alpha \rightarrow 0$. 
\end{property}
Here `replaced by a DAG' means the parents of $v$ in the original DAG have edges to all the roots of the new DAG;  
and the sinks of the new DAG have edges to all the children of $v$ in the original DAG. 
Due to this property, a Sybil user cannot hope to game the protocol by pretending to be distinct users. 
There is also a social aspect to Sybil resistance. 
A single adversary may create multiple Sybil personalities hoping the victim trusts at least one or a few of these personalities. 
We do not model this attack or behavior. 
An honest entity can also run multiple nodes. 
But, the honest node would be publicly transparent about its ownership of those nodes. 

\smallskip
\noindent
{\bf Importance transfer mechanism.} 
We illustrate the importance transfer mechanism on directed trees, as it is easier to understand. 
The general case of importance transfer on DAGs is discussed in Appendix~\ref{apx:imptransfDAG} (Fig.~\ref{fig:dagex}). 

\begin{figure}[!tbp]
    \begin{subfigure}[t]{0.45\textwidth}
        \centering
        \includegraphics[width=.8\textwidth]{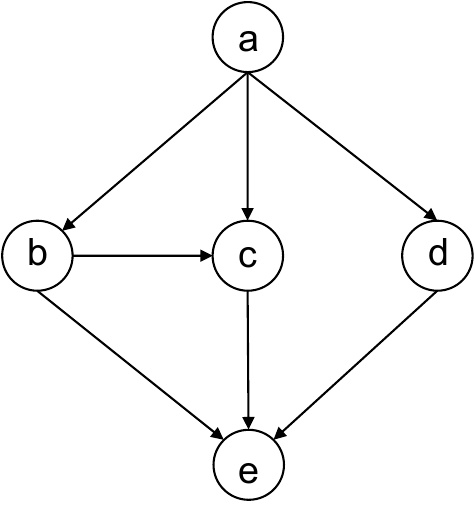}
        \caption{}
        \label{fig:dagex1}
    \end{subfigure}
    \hfill
    \begin{subfigure}[t]{0.45\textwidth}
        \centering
        \includegraphics[width=.8\textwidth]{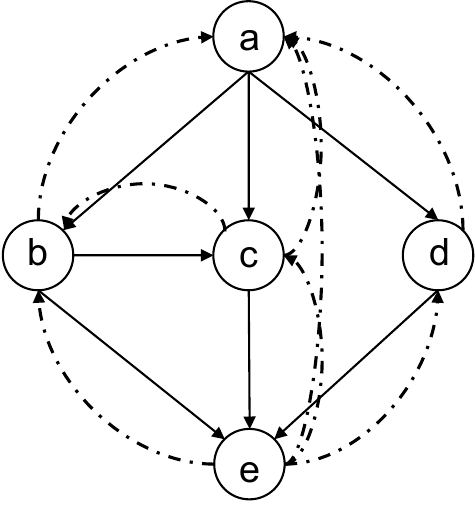}
        \caption{}
        \label{fig:dagex2}
    \end{subfigure}
    \caption{(a) Example of a collaboration DAG. (b) Each node in the DAG transfers a fraction of its importance to its ancestors in the DAG.}
    \label{fig:dagex}
\end{figure}

Consider a directed, rooted tree $D[t]$ with the publisher $V[t]$ as the root. 
Consider any node $v \in D[t]$. 
The importance of $v$ after taxation is $I_v[t](1-\beta)$. 
Let $V[t], v_1, v_2, \ldots, v_k, v$ be the path to reach $v$ from the root. 
Node $v$ first transfers a fraction $(1-(1-\alpha)^{s_{V[t]}/\epsilon})$ of its available importance to $V[t]$. 
Out of the remaining importance, it transfers a fraction $(1-(1-\alpha)^{s_{v_1}/\epsilon})$ to $v_1$. 
The process repeats until $v$ has transferred importance to all the nodes on the path $V[t], v_1, \ldots, v_k$. 
If $\alpha$ is small relative to the stake values, the quantity $(1-(1-\alpha)^{s/\epsilon})$ can be approximated as $\alpha s/\epsilon$. 
The above process can, therefore, be approximated as node $v$ sending an importance of $I_{v}[t](1-\alpha) \alpha s_{V[t]}/\epsilon$ to node $V[t]$, an importance of $I_{v}[t](1-\alpha) \alpha s_{v_1}/\epsilon$ to node $v_1$, $I_v[t](1-\alpha) \alpha s_{v_2}/\epsilon$ to node $v_2$, and so on, till node $v_k$. Each node in the tree transfers importance this way to all of its predecessors. 
The root node only receives importance. 
In practice, for a small value of $\alpha$ we can use $\beta = 100 \alpha / \epsilon$.  
\begin{theorem} \label{thm:importancetransfer}
The importance transfer scheme satisfies Properties~\ref{prop:convimp},~\ref{prop:slowtransf},~\ref{prop:fairness}, and~\ref{prop:sybil}.  
\end{theorem}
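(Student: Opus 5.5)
The plan is to verify the four properties directly from the explicit transfer rule on a rooted tree $D[t]$. Write $p_u := 1-(1-\alpha)^{s_u/\epsilon}$ for the capture factor of a node $u$, and take $Q(v)$ to be the set of strict ancestors of $v$ (the nodes on the path from $V[t]$ to $v$, excluding $v$). If the path is $V[t]=v_0,v_1,\ldots,v_k,v$, node $v$ sends to $v_j$ the amount
\[
I_v[t](1-\beta)\,p_{v_j}\prod_{i<j}(1-p_{v_i}).
\]
It therefore retains $I_v[t](1-\beta)\prod_{u\in Q(v)}(1-\alpha)^{s_u/\epsilon}=I_v[t](1-\beta)(1-\alpha)^{\sum_{u\in Q(v)}s_u/\epsilon}$.

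The only analytic tool needed is Bernoulli's inequality, $1-(1-\alpha)^m\le \alpha m$ for $m\ge 0$, together with the second-order expansion $1-(1-\alpha)^m=\alpha m-\binom{m}{2}\alpha^2+O(\alpha^3)$.

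\textbf{Property~\ref{prop:convimp}.} Every unit of importance leaving a node arrives at another node of $D[t]$; the tax was already removed beforehand. Hence the post-tax total is preserved.

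\textbf{Property~\ref{prop:slowtransf}.} A node $u$ receives, from each descendant $w$, at most $p_u$ times what $w$ still holds, and that is at most $p_u I_w[t](1-\beta)$. Summing over descendants and applying Bernoulli gives a receipt of at most $\alpha s_u i/\epsilon\le \alpha i s/\epsilon$, since $s_u\le s$.

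\textbf{Property~\ref{prop:fairness}.} In a tree, a path-closed set $S$ containing the root is exactly an ancestor-closed subtree. Transfers flow only from a node to its ancestors, so no importance ever leaves $S$. Thus the net change on $S$ is $\ge 0$, and by conservation the net change on $D[t]\setminus S$ is $\le 0$. For the pointwise bound \eqref{eq:fariness2}, drop the nonnegative receipts and bound the retained fraction using Bernoulli:
\[
(1-\alpha)^{\sum_{u\in Q(v)}s_u/\epsilon}\ge 1-\frac{\alpha}{\epsilon}\sum_{u\in Q(v)}s_u .
\]

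\textbf{Property~\ref{prop:sybil}.} I expect this to be the main obstacle, because replacing $v$ by $A[t]$ (parents of $v$ point to all roots of $A$, sinks of $A$ point to all children of $v$) generally yields a DAG rather than a tree. So the general DAG rule of Appendix~\ref{apx:imptransfDAG} must be used. I would split what $A$ gains into three parts.
\begin{enumerate}
\item Internal transfers within $A$ are conserved, so they do not change $A$'s total.
\item Outgoing transfers to $v$'s former ancestors depend only on those ancestors' stakes, exactly as for $v$, so they are unchanged.
\item Incoming importance from $v$'s former descendants is the only possible source of gain, and it is the part to bound.
\end{enumerate}
For a single descendant $w$ and any single root-to-sink path through $A$, the fraction captured is $1-(1-\alpha)^{s'/\epsilon}$ with $s'\le s_v$, which is no more than $v$'s capture $p_v$. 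The danger is that multiple paths through $A$ let several disjoint pieces of $A$ each capture importance. Here I would use the linearized (first-order) form of the DAG rule. Its total capture by all of $A$ is at most $\alpha\sum_{a\in A}s_a/\epsilon=\alpha s_v/\epsilon$, up to $O(\alpha^3)$ corrections from the product structure.

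The ratio of $A$'s gain to $v$'s is then at most $(\alpha m+O(\alpha^3))/(1-(1-\alpha)^m)$ with $m=s_v/\epsilon$. Writing the numerator as $(1-(1-\alpha)^m)+\frac{\alpha^2}{2}(m^2-m)+O(\alpha^3)$ gives exactly the stated bound, and it tends to $1$ as $\alpha\to 0$. The delicate part is checking that the appendix's DAG rule really caps the total capture of $A$ at first order by $\alpha s_v/\epsilon$ despite the multiple paths, i.e.\ that no descendant's importance is captured more than once per unit of stake.
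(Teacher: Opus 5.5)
\paragraph{Properties~\ref{prop:convimp}--\ref{prop:fairness}.} Here you follow the paper's argument in substance. Importance moves only to ancestors, the fraction a node sends to an ancestor $u$ is at most $p_u\le\alpha s_u/\epsilon$, and a path-closed $S$ containing the root is ancestor-closed. You check conservation by direct accounting rather than via the polymatroid base, which is fine. However, you argue only on trees, where the retained fraction is the single product $(1-\alpha)^{\sum_{u\in Q(v)}s_u/\epsilon}$. The theorem concerns the DAG rule, and Property~\ref{prop:sybil} forces DAGs anyway. The paper handles DAGs in three steps. It writes the fraction sent from $v$ to $u$ as a sum over root-to-$u$ paths of discount factors times $p_u$ times inverse branching counts. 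It uses that these branching-normalized path weights sum to at most $1$, as in \eqref{eq:impupperbound}. It then sums over $Q(v)$. That same normalization fact is exactly what your Property~\ref{prop:sybil} argument lacks.

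\paragraph{Property~\ref{prop:sybil}: the gap.} Your decisive claim is that $A$'s total capture from a paying former descendant of $v$ is at most $\alpha s_v/\epsilon$ times what reached $v$. You assert this rather than prove it, as you acknowledge. Invoking the ``linearized form of the DAG rule'' does not settle it. That rule is the different scheme of the non-asymptotic Sybil-resistance subsection of Appendix~\ref{apx:imptransfDAG}, and the ``$O(\alpha^3)$ corrections'' are never derived.

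\paragraph{How the paper closes it.} It starts from \eqref{eq:importancefractov} and replaces the last factor $p_v/|Q(v)\cap\Gamma(w)|$ (with $w$ a parent of $v$) by a sum over $v'\in A$ and over paths from $w$ through $A$ to $v'$. It drops the discounts $(1-\alpha)^{(\cdot)}\le 1$. It then uses that the inverse branching products along those paths sum to at most the weight that previously reached $v$. So each $v'$ absorbs at most a $p_{v'}$ fraction, once, and $A$ receives at most $\sum_{v'\in A}p_{v'}$ where $v$ received $p_v$. Bernoulli then gives $\sum_{v'}p_{v'}\le\alpha s_v/\epsilon$ with no correction term, and your closing expansion finishes. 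The paper instead Taylor-expands and uses $\sum_{v'}(s_{v'}/\epsilon)^2\ge s_v/\epsilon$.

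\paragraph{Entry mass at $w$.} The step that the mass entering $A$ at $w$ is no larger than what entered $v$ is automatic when $D[t]$ is a tree, since everything that reached $v$ now reaches $A$'s roots. It needs care if $w$ has another route to the payer, because the extra roots of $A$ then take a larger share of $w$'s split.

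\paragraph{Your per-path idea.} In the tree setting, once you know each unit of the payer's importance is routed along a single random path, your own per-path observation already suffices and even gives factor $1$. Each unit crosses $A$ along one root-to-sink path of total stake at most $s_v$, so it is captured with probability at most $p_v$. Your items 1 and 2 can be dropped, since the property concerns only importance received.
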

(Proof in Appendix~\ref{apx:proofofthm1}). 

Our importance transfer is identical to an asymmetric Shapley value with the permutation weights chosen to provide Sybil resistance. 
For any subset $S \subseteq D[t]$, define the utility $u(S)$ of $S$ as the total importance of all the nodes reachable from $S$ (including the total importance of $S$ itself), i.e., 
\begin{align}
u(S) = \sum_{\substack{v \in D[t]: \exists u \in S \text{ such that } \\ v \text{ is reachable from } u}} I_v[t](1-\beta).     
\end{align}
The utility function is a monotone, submodular function and has an associated polymatroid.
For any permutation $(v_{\sigma(1)}, v_{\sigma(2)}, \ldots, v_{\sigma(|D[t]|)})$ of the nodes in $D[t]$, where $\sigma$ is the permutation function, the greedy solution of
\begin{align}  x_{v_{\sigma(i)}} = u(\{ v_{\sigma(1)}, v_{\sigma(2)}, \ldots, v_{\sigma(i)} \}) - u(\{ v_{\sigma(1)}, v_{\sigma(2)}, \ldots, v_{\sigma(i-1)} \}), 
\end{align}
is a corner point on the base of the polymatroid. 
The standard symmetric Shapley value, in addition to being computationally expensive, does not satisfy Property~\ref{prop:sybil}. 
We, therefore, consider a non-standard asymmetric Shapley value by choosing a subset of permutation to average over instead of all possible permutations. 
Details are provided in Appendix~\ref{apx:imptransfDAG}. 

\smallskip
\noindent
{\bf Incentives.} To maximize the block rewards, the publisher has an incentive to form and use a large collaboration set (large, in the sense of importance). 
The publisher may give a portion of the the block reward to its collaborators.
E.g., for a parameter $\gamma  \in (0, 1)$ the publisher keeps $I_{V[t]}(1-\beta ) + \sum_{v\in D[t]: v\neq V[t]} I_v[t](1-\beta)\gamma$ of the reward, and each node $v \in D[t], v \neq V[t]$ receives $I_{v}[t](1-\beta)(1-\gamma)$ of reward.
Other reward allocations are possible. 
The parameters (e.g., $\gamma$) of the reward allocation can be  determined through a private negotiation between the publisher and the collaborators. 
During the negotiations, the publisher can also decide the exact structure of the DAG and which collaborator goes where in the DAG. 
It is completely up to the publisher to invite a node to be a collaborator. 
It is also up to the invited node whether to accept the invitation or not. 

A node loses a fraction $\beta$ of its importance each round to taxes, regardless of whether it is part of the collaboration set or not. 
If the node rejects an invitation to collaborate, it loses its tax and gains no monetary reward. 
If the node accepts the invitation, depending on its location in the DAG it may overall lose or gain importance. 
However, the node can receive monetary compensation from the publisher. 
Thus, there is an incentive for the node to join a collaboration as it can benefit the node by increasing its importance and/or providing a monetary reward. 
Who a node chooses to collaborate with has important implications on the overall rewards earned by the node. 
E.g., if the node is going to be a publisher soon, the node may not want to spend its importance in the current round; 
or, if the node wants to maximize its tax refund (\S\ref{s: tax refund}), 
But in all cases, it is detrimental to the node to not collaborate at all, or to collaborate only with a fixed, ossified coalition. 

\smallskip
\noindent
{\bf Making collaborations useful.}
In our discussion so far, the value of collaboratively producing blocks is to prevent ossification. 
However, collaborators can also be tasked with doing additional useful work during the block production process such as executing transactions. 
We discuss how this idea can be used to scale the blockchain in \S\ref{s:scalingperf}. 

\subsection{Tax Refund} \label{s: tax refund}

The collector node collects a tax each round for a few reasons: (1) it removes importance out of idle or unused nodes, (2) it forces nodes to join a collaboration, (3) it allows for the importance of an ossified coalition to go to zero, and (4) it can be used to enforce an all-to-all collaboration paradigm. 
We discuss the fourth point here. 

Tax collected by the collector is transferred back to well-behaving nodes in the form of a tax refund (otherwise, the total importance of all the nodes will go to zero).
A key challenge here is in identifying which nodes are well behaved, and how much to offer in refunds to those nodes. 
To solve this, we consider an interaction graph $G[t]$ for each round $t$. 
Each node $v \in V$ is a vertex in $G[t]$. 
At genesis, the graph $G[0]$ does not have any edges. 
At time $t$, the graph $G[t]$ can be derived from $G[t-1]$ as follows. 
For any two nodes $u, v \in V$, let $I_{(u,v)}[t]$ be the amount of importance transferred from $u$ to $v$ at round $t$. 
We set $I_{(u,v)}[t] = 0$ if $u \notin D[t]$ or $v \notin D[t]$. 
For any edge $(u, v) \in G[t]$ and for any time $t$, let $w_{(u,v)}[t]$ be the weight of the edge $(u,v)$ at time $t$ in $G[t]$. 
The weight of all the edges is set to zero at genesis. 
Then, 
\begin{align}
w_{(u,v)}[t] = (1-\alpha) w_{(u,v)}[t-1] + \alpha I_{(u,v)}[t],
\end{align}
for all $u,v \in D[t]$ and for all time $t$. 
The edge weight $w_{(u,v)}[t]$ is an exponentially-weighted moving average of the importance transferred from $u$ to $v$.  

\begin{figure}[!tb]
    \centering
    \includegraphics[width=.9\textwidth,]{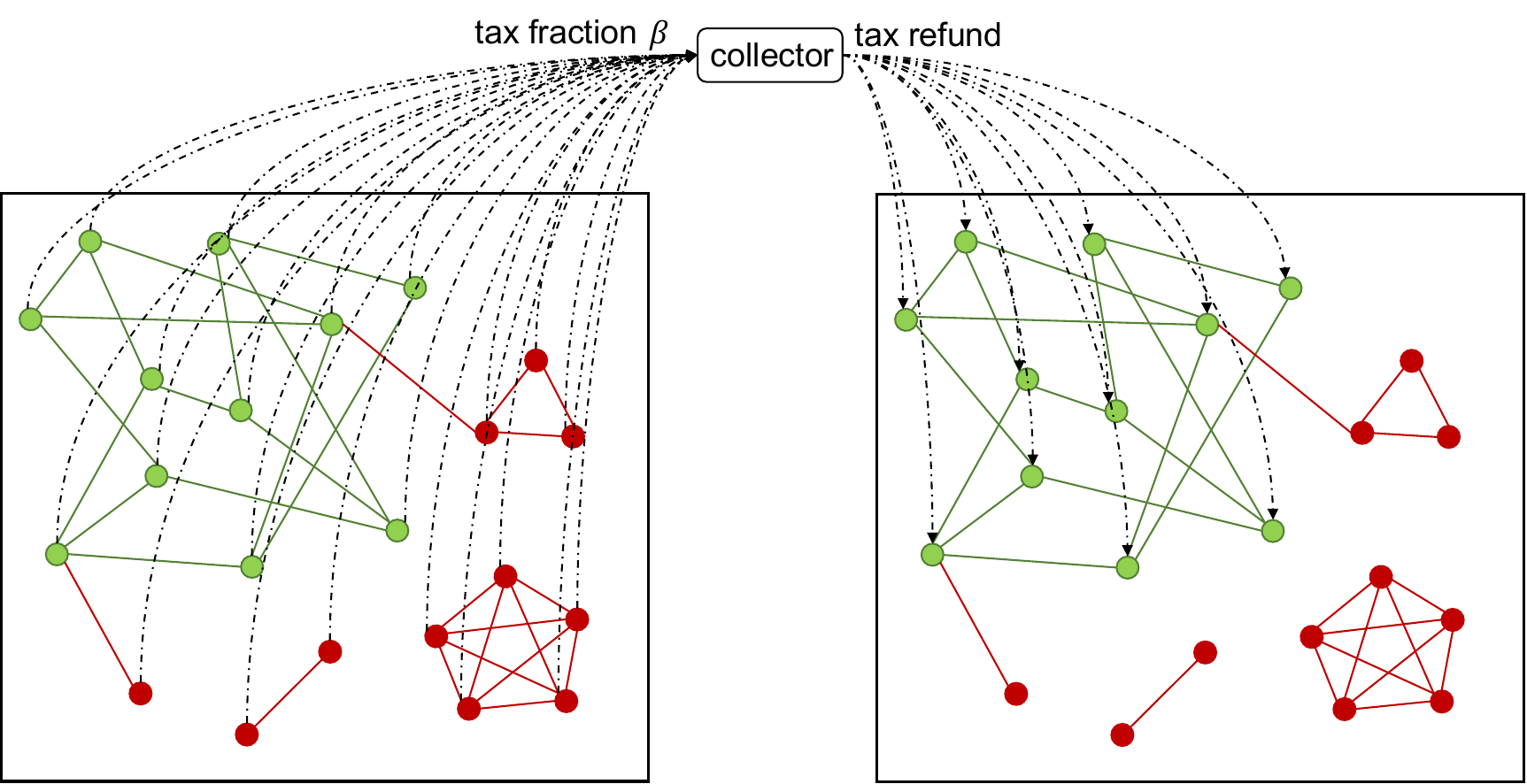}
    \caption{A $\beta$ fraction of importance is collected as tax from all the nodes at the beginning of a round. Nodes that are part of the maximal expander receive a tax refund at the end of the round, provided the total stake of the expander is at least 0.5.}
    \label{fig:tax}
\end{figure}

For each node, the interaction graph measures how much importance the node has sent and received through collaborations in roughly the last $1/\alpha$ rounds. 
Once we have $G[t]$, we derive a maximal subgraph $G'[t]$ from $G[t]$ that is an $\phi$-expander for a decentralization  parameter $\phi$. 
Since $G'[t]$ is a $\phi$-expander, for any subset $S \subset G'[t]$, we have 
\begin{align}
 \frac{\min(\sum_{(u, v) \in G[t]: u \in S, v \in V\backslash S} w_{(u,v)}[t], \sum_{(u, v) \in G[t]: u \in V\backslash S, v \in S} w_{(u,v)}[t])}{\min(\sum_{u \in S} I'_u[t], \sum_{u \in V\backslash S} I'_u[t])} \geq \phi . 
\end{align}
Recent works have proposed efficient near-linear time randomized expander decomposition algorithms~\cite{fleischmann2025improved,saranurak2019expander,sulser2024near,arvestad2022near}. 
If the total stake in $G'[t]$ is less than 50\%, none of the nodes receive a tax refund at round $t$. 
If the total stake of $G'[t]$ is greater than 50\%, each node 
$v \in G'[t]$ receives an importance of $ I'_v[t] (\sum_{v \in G'[t]} s_v) \alpha (I_c[t] + (2 - I_c[t])\beta)/(\sum_{v \in G'[t]} I'_v[t])$ from the collector, where $(I_c[t] + (2 - I_c[t])\beta)$ denotes the importance of the collector after taxing the nodes at round $t$.
Tax refunds are issued at the end of a round (Fig.~\ref{fig:tax}). 

\smallskip
\noindent
{\bf Rationale.}
The expander graph $G'[t]$ (if it exists) ensures that any subset of nodes $S \subset G'[t]$ with $\sum_{v\in S} I'_v[t] \leq 0.5 \sum_{v \in G'[t]}I'_v[t]$ has outgoing edges of weight of at least $\phi \sum_{v \in S} I'_v[t]$. 
Thus, a set of malicious nodes cannot refuse to collaborate with honest nodes, without losing their importance refunds. 

\subsection{Analysis}

Consider an ossified set of nodes $S \subset V$. 
We say a collaboration between $S$ and $V \backslash S$ is $ \phi_\mathrm{out}$-{\em weak} if $\sum_{(u,v): u\in S, v\in V\backslash S}w_{(u,v)}[t] \leq \phi_\mathrm{out} \sum_{v\in S}I'_v[t]$ for all time $t$. 


\begin{theorem}[Preventing ossification]
Consider an ossified set of nodes $S \subset V$ having an $, \phi_\mathrm{out}$-weak collaboration with $V \backslash S$ with $\phi_\mathrm{out} < \phi$. 
Then, the total importance of the set $S$ is bounded as 
\begin{align}
  \limsup_{t\rightarrow \infty}\sum_{u \in S} I_u[t] \leq \frac{2\alpha \sum_{u\in S}s_u}{\beta(\epsilon - \alpha\sum_{u\in S}s_u) + \alpha \sum_{u\in S}s_u}.   
\end{align}
\label{thm:preventingossification}
\end{theorem}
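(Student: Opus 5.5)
We set $X[t] = \sum_{u\in S} I_u[t]$ and $\sigma = \sum_{u\in S} s_u$, and track $X[t]$ round by round. The aim is a one-step recursion of the form
\begin{align}
X[t+1] \le (1-\beta)X[t] + \alpha\Big(\frac{\sigma}{\epsilon}\Big)\big(2 - X[t]\big)(1-\beta) + (\text{refund to } S),
\end{align}
followed by showing that the refund term vanishes. The bound then comes from the fixed point of this affine recursion.

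First we examine tax and refunds. At the start of every round the collector takes $\beta X[t]$ from $S$.

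The key structural step is that no node of $S$ can lie in the maximal expander $G'[t]$ once $\phi_\mathrm{out}<\phi$. Suppose $T = S\cap G'[t]$ is nonempty. Weak collaboration bounds the outgoing weight of $T$ to the rest of $V$ by $\phi_\mathrm{out}\sum_{u\in S} I'_u[t]$, which falls short of the expansion requirement $\phi \min(\cdot,\cdot)$. So removing $T$ (or $S$) from $G'[t]$ keeps the expansion property, which contradicts the choice of a maximal expander containing $S$-nodes. This needs care: we must apply the expansion inequality to $T$ itself, assuming $T$ is the smaller side in importance. Otherwise we use the complement, and we must argue that edges from $T$ to $S\setminus T$ do not count against us. This is the step I expect to be the main obstacle. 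The definition of weak collaboration controls $S$ as a whole rather than every subset $T$, and the min in the denominator may pick the complement. I would first try treating the maximal expander as the canonical one produced by decomposition, so that $S$ is entirely excluded. If that fails, I would add the natural assumption that weak collaboration holds for subsets of $S$. Either way, the refund term to $S$ is zero.

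Next we bound the importance $S$ can gain through DAG transfers. Inside a collaboration DAG, importance moves toward ancestors. By Property~\ref{prop:convimp} and the fairness bound \eqref{eq:fariness2}, a node $u\in S$ receives at most a fraction $1-(1-\alpha)^{s_u/\epsilon}\le \alpha s_u/\epsilon$ of each descendant's post-tax importance. Summed over $S$, the inflow into $S$ from outside nodes in a round is at most $\frac{\alpha\sigma}{\epsilon}$ times the outside post-tax importance. That outside importance is at most $(2 - X[t])(1-\beta)$, since the collector's importance is also bounded by the total, and transfers within $S$ are conserved. We may also drop outflow from $S$, since it only helps.

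This gives $X[t+1]\le (1-\beta)X[t] + \frac{\alpha\sigma}{\epsilon}(1-\beta)(2-X[t])$, possibly after absorbing $(1-\beta)\le 1$ in the inflow term. Solving for the fixed point $X^*$ of $X = (1-\beta)X + \frac{\alpha\sigma}{\epsilon}(2-X)$ yields
\begin{align}
X^* = \frac{2\alpha\sigma}{\beta\epsilon + \alpha\sigma},
\end{align}
and keeping the $(1-\beta)$ factor precisely gives the stated form $\frac{2\alpha\sigma}{\beta(\epsilon-\alpha\sigma)+\alpha\sigma}$. The map is a contraction with slope $1-\beta-\frac{\alpha\sigma}{\epsilon}(1-\beta)<1$. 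A standard comparison argument therefore gives $\limsup_t X[t]\le X^*$, which completes the plan.
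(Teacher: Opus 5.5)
Your route is the paper's. No refund reaches $S$ because $S$ lies outside $G'[t]$, $S$ pays $\beta X[t]$ in tax, its DAG inflow is at most $\frac{\alpha\sigma}{\epsilon}$ times the importance held outside $S$, and the bound is the balance point of tax against inflow. Your affine recursion plus comparison is a cleaner rendering of the paper's ``importance keeps decreasing until the tax loss equals the inflow'' step. Your bound $(1-\beta)(2-X[t])$ on outside importance is also sharper than the paper's $2-(1-\beta)X[t]$.

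Three small corrections:
\begin{enumerate}
\item The per-ancestor bound $1-(1-\alpha)^{s_u/\epsilon}$ does not follow from \eqref{eq:fariness2}. That inequality only caps a node's \emph{total} outflow, so it would give the stake of the whole DAG in place of $\sigma$. The bound you need is inequality \eqref{eq:impupperbound} in the proof of Theorem~\ref{thm:importancetransfer}.
\item Keeping the factor $(1-\beta)$ yields the fixed point $(1-\beta)\frac{2\alpha\sigma}{\beta(\epsilon-\alpha\sigma)+\alpha\sigma}$, not exactly the stated form, which comes from the paper's cruder inflow bound. This is harmless because your value is smaller.
\item The comparison $X[t]\le f^{t}(X[0])$ needs $f$ nondecreasing, i.e.\ $(1-\beta)(1-\alpha\sigma/\epsilon)\ge 0$, not just slope below $1$. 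If $\alpha\sigma>\epsilon$, the stated right-hand side is at least $2$ and there is nothing to prove.
\end{enumerate}

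The step you flag, however, is a genuine gap, and it is the paper's gap too. The paper's proof just asserts, in its first sentence, that $\phi_\mathrm{out}$-weakness keeps $S$ out of $G'[t]$. This cannot be derived from the stated hypothesis. The real obstruction is the heavy-side case you mention (in the extreme, $S$ may itself be $G'[t]$), not the subset case.

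Here is a configuration showing this. Take $S$ with stake above $1/2$ that is internally a $\phi$-expander and has no edges to $V\setminus S$. Then $S$, and every subset of $S$, is $\phi_\mathrm{out}$-weak towards $V\setminus S$ for every $\phi_\mathrm{out}\ge 0$. Yet the paper itself remarks, just before Lemma~\ref{lem:iclowerbound}, that in this situation $G'[t]=S$. Moreover, Theorem~\ref{thm:rewardingdecentralization} with $\phi_\mathrm{out}=0$ asserts $\liminf_t\sum_{v\in S}w_v[t]\approx\sigma$. Since $w_v$ is an exponential average of $I_v$, this is incompatible with the stated bound, which is roughly $2\alpha\sigma/(\beta\epsilon)$, e.g.\ about $\sigma/50$ for the suggested $\beta=100\alpha/\epsilon$.

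So your option (b) does not close the step, and option (a) is not an argument but an additional hypothesis. The correct repair is to state that hypothesis explicitly. One choice is to assume $S\cap G'[t]=\emptyset$ for all $t$, which is what the paper uses silently. Another is to assume weakness of every subset of $S$ together with a condition forcing any nonempty $S\cap G'[t]$ to be the lighter side of its cut. With that hypothesis made explicit, the rest of your argument is sound and gives a slightly sharper constant than the statement.
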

(Proof in Appendix~\ref{s:proofofthmprevossification}). 

By choosing a tax rate $\beta$ that is significantly larger than $\alpha$ (e.g., $\beta = 100\alpha/\epsilon$), we can ensure that coalitions that do not strongly collaborate with others outside the group achieve a small eventual importance. 
By a similar argument as above, we can show that if a set $S$ has no collaborations with $V \backslash S$, the importance of $S$ goes to zero. 
\begin{corollary}
For an ossified set of nodes $S \subset V$ with $\sum_{(u,v): u\in V\backslash S, v \in S}w_{(u,v)}[t] = 0$, we have $\limsup_{t\rightarrow\infty}\sum_{u\in S}I_u[t] = 0$. 
\end{corollary}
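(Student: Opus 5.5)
We track the aggregate importance $X[t] := \sum_{u\in S} I_u[t]$ through one round and show it satisfies a contracting linear recursion with no positive forcing term, so $X[t]\to 0$. We follow the same bookkeeping as in the proof of Theorem~\ref{thm:preventingossification}: within a round there are three effects, namely the tax, the transfer inside the collaboration DAG $D[t]$, and the tax refund.

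\textbf{Step 1 (tax).} The tax removes exactly $\beta X[t]$, leaving $(1-\beta)X[t]$.

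\textbf{Step 2 (DAG transfer).} The hypothesis $\sum_{(u,v):u\in V\setminus S,v\in S} w_{(u,v)}[t]=0$ for all $t$, together with the recursion $w_{(u,v)}[t]=(1-\alpha)w_{(u,v)}[t-1]+\alpha I_{(u,v)}[t]$ and nonnegativity of all terms, forces $I_{(u,v)}[t]=0$ for every $u\notin S$, $v\in S$ and every $t$. So $S$ never receives importance from outside. Within a DAG, transfers among members of $S$ net to zero by Property~\ref{prop:convimp} applied to the $S$-part, and transfers from $S$ to outside are nonnegative. Hence after the transfer the importance of $S$ is at most $(1-\beta)X[t]$.

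\textbf{Step 3 (refund).} This is the main obstacle: we must show that no node of $S$ lies in the maximal expander $G'[t]$ once $X[t]>0$, or else bound its refund. Suppose $S\cap G'[t]=T\neq\emptyset$ and apply the $\phi$-expansion inequality to $T$. Its numerator is a minimum of outgoing and incoming weight, and the incoming weight into $T\subseteq S$ from $V\setminus S$ is zero. The incoming weight from $S\setminus T$ lies outside $G'[t]$. Taking the expansion condition to be measured within $G'[t]$, as the rationale paragraph indicates, the numerator is therefore $0$. The denominator is positive whenever some node of $T$ has positive importance, so the inequality $0\ge\phi>0$ fails. Thus nodes of $S$ with positive importance cannot belong to $G'[t]$, and nodes with zero importance receive a refund proportional to $I'_v[t]=0$. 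Either way, $S$ receives no refund. If the expansion is instead measured in all of $G[t]$, we would argue that a zero-importance member contributes nothing and replace $T$ by its positive-importance part. I expect this degenerate case, where the denominator may vanish, to be the delicate point, and I would handle it by the convention that refunds scale with $I'_v[t]$.

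\textbf{Step 4 (conclusion).} Combining the steps gives $X[t+1]\le(1-\beta)X[t]$, so $X[t]\le(1-\beta)^t X[0]\to 0$ because $\beta\in(0,1)$. Hence $\limsup_{t\to\infty} X[t]=0$.

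As a consistency check, the bound of Theorem~\ref{thm:preventingossification} came from a refund term of order $\alpha\sum_{u\in S}s_u$. Here that forcing term is zero, and setting it to zero in the same fixed-point computation $X^*=(1-\beta)X^*+0$ yields $X^*=0$.
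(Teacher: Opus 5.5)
Your Steps 1, 2 and 4 follow the paper's route. The paper proves the corollary only ``by a similar argument'' to Theorem~\ref{thm:preventingossification}: no refund and zero inflow, so the fixed point collapses to $0$. Deriving $I_{(u,v)}[t]=0$ from the vanishing weights is fine. (Transfers internal to $S$ cancel because the mechanism consists of pairwise payments to predecessors, not because of Property~\ref{prop:convimp}, which concerns the whole DAG.)

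The gap is Step 3, where you claim that no node of $S$ with positive importance can lie in $G'[t]$. Your contradiction needs a cut of $G'[t]$ whose far side is nonempty and carries positive importance. This fails exactly when $G'[t]\subseteq S$. Then $T=G'[t]$ has no complement inside $G'[t]$. For a proper subset $T'$ of $T$, the incoming weight from $T\setminus T'$ lies inside $G'[t]$, so nothing forces the numerator to vanish.

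This cannot be patched within your argument, because the statement itself fails in that case. Suppose $\sum_{u\in S}s_u>1/2$ and $S$ collaborates internally as a $\phi$-expander with no edges to or from $V\setminus S$. Then $G'[t]=S$, as the paper itself remarks just before Lemma~\ref{lem:iclowerbound}. Every round $S$ collects the entire refund, $(\sum_{u\in S}s_u)\,\alpha\,(I_c[t](1-\beta)+2\beta)>\alpha\beta$. Hence $X[t+1]\ge(1-\beta)X[t]+\alpha\beta$ and $\liminf_t X[t]\ge\alpha>0$, consistent with Theorem~\ref{thm:rewardingdecentralization} at $\phi_\mathrm{out}=0$. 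So exclusion cannot be derived from the zero-inflow hypothesis alone.

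What is missing is the hypothesis the paper uses silently in the first line of its proof of Theorem~\ref{thm:preventingossification}: the ossified set is excluded from the maximal expander. The cleanest fix is to state this explicitly.

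Alternatively, assume $\sum_{u\in S}s_u<1/2$. If $G'[t]\subseteq S$, its stake is below $1/2$ and no refund is paid to anyone. Otherwise $T=S\cap G'[t]$ is a proper subset of $G'[t]$ with $G'[t]\setminus T\subseteq V\setminus S$. Your zero-in-weight argument then goes through, provided $G'[t]\setminus T$ carries positive importance.

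Two smaller points:
\begin{itemize}
\item Your convention that refunds scale with $I'_v[t]$ does not dispose of the vacuous case where $G'[t]\setminus T$ has zero importance. The refund is normalized by $\sum_{v\in G'[t]}I'_v[t]$, so $T$ would then absorb all of it.
\item Your fallback for the reading in which expansion is measured against all of $V$ does not work. There the numerator for $T$ includes incoming weight from $S\setminus T$, which need not vanish, and replacing $T$ by its positive-importance part does not change that.
\end{itemize}
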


We say a subset $S \subset V$ achieves a {\em strong} collaboration with itself if the total stake of $S$ is at least 0.5, i.e., $\sum_{u\in S}s_u >  0.5$, and 
\begin{align}
 \frac{\min(\sum_{(u, v) \in G[t]: u \in S', v \in S\backslash S'} w_{(u,v)}[t], \sum_{(u, v) \in G[t]: u \in S\backslash S', v \in S'} w_{(u,v)}[t])}{\min(\sum_{u \in S'} I'_u[t], \sum_{u \in S\backslash S'} I'_u[t])} \geq \phi, 
\end{align}
for all $S' \subset S$ and for all time $t$. 
If $S$ has a strong collaboration internally but has a weak collaboration with $V \backslash S$, then $G'[t] = S$ for all $t$ and the total amount of importance in $S$ grows to a large value.
We first show the following Lemma. 
\begin{lemma} \label{lem:iclowerbound}
The importance of the collector is at least 1, i.e., $I_c[t] \geq 1 ~\forall t > 0$. 
\end{lemma}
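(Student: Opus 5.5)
Since total importance is conserved at 2 and the nodes' share is $2-I_c[t]$, I would prove the claim by induction on $t$, tracking how $I_c$ changes within a single round. The base case is $I_c[0]=1$ by construction.

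For the inductive step, fix a round $t$ with $I_c[t]\ge 1$. Write $N[t]=\sum_{v\in V}I_v[t]=2-I_c[t]\le 1$ for the nodes' total importance. The round has three phases.

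First comes taxation. The collector gains $\beta N[t]$, so its post-tax importance is
\begin{align*}
C:=I_c[t]+(2-I_c[t])\beta .
\end{align*}

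Second come the DAG transfers. By Property~\ref{prop:convimp} they only move importance among nodes. Hence they do not touch the collector, and after this phase $\sum_v I'_v[t]=2-C$.

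Third comes the refund. If $G'[t]$ has stake at most $1/2$, there is no refund and $I_c[t+1]=C\ge I_c[t]\ge 1$. Otherwise each $v\in G'[t]$ receives
\begin{align*}
\frac{I'_v[t]\,\sigma\,\alpha\, C}{\sum_{u\in G'[t]}I'_u[t]}, \qquad \text{where } \sigma=\sum_{u\in G'[t]}s_u .
\end{align*}
Summing over $v\in G'[t]$, the normalizing denominator cancels. The total refund is therefore exactly $\sigma\alpha C\le \alpha C$, using $\sigma\le 1$. This gives
\begin{align*}
I_c[t+1]\ge (1-\alpha)C=(1-\alpha)\bigl(I_c[t](1-\beta)+2\beta\bigr).
\end{align*}

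It remains to show this lower bound is at least 1 whenever $I_c[t]\ge 1$. The map $x\mapsto(1-\alpha)(x(1-\beta)+2\beta)$ is increasing, so it suffices to check $x=1$, where it equals $(1-\alpha)(1+\beta)$. This is at least 1 iff $\beta\ge \alpha/(1-\alpha)$, i.e.\ $\beta(1-\alpha)\ge\alpha$.

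This parameter inequality is the main obstacle, because the lemma is false without some condition linking $\beta$ and $\alpha$. I would invoke the paper's stated regime $\beta=100\alpha/\epsilon$ and state the condition $\beta\ge\alpha/(1-\alpha)$ explicitly as an assumption. With $\epsilon\le 1$ and $\alpha$ small it holds easily, since $100\alpha/\epsilon\ge 100\alpha>\alpha/(1-\alpha)$ for $\alpha<0.99$.

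If a cleaner statement were wanted, I could also try the sharper bound using $\sigma\le 1$ together with $N[t]\le 1$. However, this seems unnecessary, since the fixed-point check at $x=1$ already closes the induction. The degenerate case of a round with no tax refund needs no extra care, since then $I_c$ never decreases. This gives $I_c[t]\ge 1$ for all $t$.
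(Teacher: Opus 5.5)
Your argument is correct, but it analyzes a different refund rule than the paper's proof does, and so it proves something different. You take the refund coefficient to be $\alpha$, as in the definition in \S\ref{s: tax refund}. The collector then retains exactly $(1-\sigma\alpha)C$, and your monotone induction closes at exactly $I_c\ge 1$ under the explicit hypothesis $\beta(1-\alpha)\ge\alpha$.

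The paper's proof instead writes the refund with coefficient $\beta$, giving $I_c[t+1]=I_c[t]+(2-I_c[t])\beta-\sigma\beta\,(I_c[t]+(2-I_c[t])\beta)$. It shows that $I_c$ increases whenever it lies below $\frac{2-2\sigma\beta}{1+\sigma(1-\beta)}\ge\frac{2-2\beta}{2-\beta}$, and that a single round can lower it by at most $2\beta$. It therefore concludes only $I_c[t]\ge\frac{2-2\beta}{2-\beta}-2\beta\approx 1$.

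Under that dynamics the literal claim is false: with $\sigma=1$ one gets $I_c[1]=1-\beta^2<1$. This confirms your remark that an extra condition is needed.

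\textbf{Your route:} it gives an exact bound, at the price of assuming $\alpha\ll\beta$, which holds for $\beta=100\alpha/\epsilon$.

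\textbf{The paper's route:} its threshold-plus-undershoot argument needs no relation between $\alpha$ and $\beta$. It also produces precisely the constant used downstream in Theorem~\ref{thm:rewardingdecentralization}, since $(1-\beta)\bigl(\frac{2-2\beta}{2-\beta}-2\beta\bigr)+2\beta=\frac{2-4\beta+6\beta^2-2\beta^3}{2-\beta}$.

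Be aware that if you adopt the $\alpha$-refund reading, that theorem's proof assumes a refund of $\sigma\beta C$. It would have to be redone with the smaller refund $\sigma\alpha C$.
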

(Proof in Appendix~\ref{s:proofoflemiclowerbound}). 

Let $w_v[t] = (1-\alpha) w_v[t-1] + \alpha I_v[t-1] = \sum_{t'=0}^t \alpha (1-\alpha)^{t-t'}I_v[t]$. 
That is, $w_v[t]$ captures the average importance of $v$ over the past roughly $1/\alpha$ rounds. 
We provide a lower bound for the average importance of $S$ in the following. 
\begin{theorem}[Rewarding  decentralization]
Consider a set of nodes $S \subset V$. 
If (1) $S$ has a strong collaboration with itself, (2) $S$ has a  $\phi_\mathrm{out}$-weak collaboration with $S'$ for any $S'\subseteq V \backslash S$, and (3) $S'$ has a $\phi_\mathrm{out}$-weak collaboration with $S$ for any $S'\subseteq V\backslash S$, where $\phi_\mathrm{out} < \phi$ then
\begin{align}
\liminf_{t\rightarrow\infty} \sum_{v\in S} w_v[t] \geq (\sum_{v\in S}s_v) \frac{(2 - 4\beta + 6\beta^2 -2\beta^3)}{2-\beta} - \frac{\phi_\mathrm{out}}{\beta}.     
\end{align}
\label{thm:rewardingdecentralization}
\end{theorem}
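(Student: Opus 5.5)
We track the aggregate importance $I_S[t] = \sum_{v\in S} I_v[t]$ through one round and derive a linear recursion with a positive forcing term. This recursion is then pushed through the exponential averaging that defines $w_v[t]$.

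\textbf{Step 1: the expander is $S$.} By hypothesis (1), $S$ has stake more than $0.5$ and internal conductance at least $\phi$ at every round. By hypotheses (2) and (3), any set that adds nodes of $V\backslash S$ to $S$ has a cut to the added part of weight at most $\phi_\mathrm{out}\cdot(\text{importance})<\phi\cdot(\text{importance})$, so it fails the $\phi$-expander condition. Hence the maximal expander $G'[t]$ equals $S$ for all $t$, and the refund condition (stake above $0.5$) holds at every round.

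\textbf{Step 2: one-round balance for $I_S$.} In round $t$:
\begin{itemize}
\item The tax removes $\beta I_S[t]$.
\item Importance transfers inside the DAG conserve importance among members (Property~\ref{prop:convimp}), so $S$ changes only through flows across the cut. The net flow out of $S$ toward $V\backslash S$ is controlled by the moving-average weights $w_{(u,v)}$, which hypothesis (2) bounds by $\phi_\mathrm{out}$ times the current importance.
\item The refund summed over $S$ is $\alpha s_S\,(I_c[t]+(2-I_c[t])\beta)$, where $s_S=\sum_{v\in S}s_v$. The factor $I'_v/\sum I'_v$ sums to $1$ over $G'[t]=S$.
\end{itemize}
By Lemma~\ref{lem:iclowerbound}, $I_c[t]\ge 1$. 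Together with the total importance being $2$, this bounds the collector's post-tax importance from below by an expression in $\beta$ alone. I would write it as $(1+\beta)-O(\beta^2)$ after accounting that the taxed nodes hold at most $1$ importance.

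This gives
\[
I_S[t+1]\ \ge\ (1-\beta)I_S[t] + \alpha s_S\, c(\beta) - (\text{cut leakage}).
\]

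\textbf{Step 3: averaging.} Multiply the recursion by $\alpha(1-\alpha)^{t-t'}$ and sum to obtain a recursion for $W_S[t]=\sum_{v\in S} w_v[t]$. The leakage converts into a $w_{(u,v)}$-sum, which hypothesis (2) bounds by $\phi_\mathrm{out}$ times (at most bounded) importance. Taking $\liminf$ of a contraction with rate $\beta$ gives a fixed point of the form $s_S\,c(\beta)\cdot(\text{ratio}) - \phi_\mathrm{out}/\beta$. Here the $1/\beta$ is the geometric sum of the leakage. Matching the constant to $(2-4\beta+6\beta^2-2\beta^3)/(2-\beta)$ should come out of combining the collector lower bound with the tax factor $(1-\beta)$ and the averaging factor.

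\textbf{Main obstacle.} The hard part is getting the exact constant and justifying the leakage term. First, the cut weights are moving averages of past transfers, not current transfers, so the bound on net outflow needs care; I would try bounding cumulative outflow directly by $\sum_t w$ via the EMA identity. Second, I would pin down the collector's post-tax importance sharply enough. I expect some slack in the polynomial in $\beta$. I would first attempt the computation via the stationary-point ratio and then adjust the collector bound until the stated polynomial appears.
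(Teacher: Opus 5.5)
Your skeleton matches the paper's proof: show $G'[t]=S$, write a one-round balance for $\sum_{v\in S}I_v[t]$, pass it through the $\alpha$-weighted moving average so the outflow becomes $\sum_{(u,v)}w_{(u,v)}[t]\le\phi_\mathrm{out}\sum_{v\in S}I'_v[t]\le\phi_\mathrm{out}$, and solve the $\liminf$ inequality. The gap is in the forcing term of your Step~2. You credit $S$ with a refund of $\alpha s_S(I_c[t]+(2-I_c[t])\beta)$, where $s_S=\sum_{v\in S}s_v$. So your recursion decays at rate $\beta$ but is replenished at rate $\alpha$. Solving $L\ge(1-\beta)L+\alpha s_S c(\beta)-\phi_\mathrm{out}$ gives only $L\ge\frac{\alpha}{\beta}s_S c(\beta)-\phi_\mathrm{out}/\beta$.

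The unspecified ``ratio'' in your Step~3 is exactly $\alpha/\beta$. It is tiny in the intended regime $\beta=100\alpha/\epsilon$. Sharpening the collector bound only changes a polynomial in $\beta$, so it cannot remove this factor. In fact, under that refund rule the stated bound cannot hold in general. Take no cross-cut collaboration and $G'[t]=S$ for all $t$. Then the collector obeys a linear contraction converging to $2\beta(1-\alpha s_S)/(\beta+\alpha s_S(1-\beta))$. All nodes together therefore retain only $2\alpha s_S/(\beta+\alpha s_S(1-\beta))$ importance, far below $s_S c(\beta)$ when $\alpha\ll\beta$.

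The paper's proof, like its proof of Lemma~\ref{lem:iclowerbound}, uses a different refund. It takes the amount paid to $G'[t]$ to be $(\sum_{v\in G'[t]}s_v)\,\beta\,(I_c[t]+(2-I_c[t])\beta)$, i.e.\ at the tax rate $\beta$, not the $\alpha$ in the main-text refund rule. The forcing term is then $\beta s_S c(\beta)$, and $L\ge(1-\beta)L+\beta s_S c(\beta)-\phi_\mathrm{out}$ gives exactly the stated bound. You need to adopt this convention explicitly, noting that it departs from the main-text formula, for the argument to close.

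With that fixed, the constant you flagged as the main obstacle is not one. The paper gets $\frac{2-4\beta+6\beta^2-2\beta^3}{2-\beta}$ by inserting the bound actually derived in the Lemma's proof into the post-tax collector importance $(1-\beta)I_c[t]+2\beta$. That bound is $I_c[t]\ge\frac{2-2\beta}{2-\beta}-2\beta$. If you use the Lemma's headline bound $I_c[t]\ge 1$ instead, you get $1+\beta$. Since $(1+\beta)(2-\beta)-(2-4\beta+6\beta^2-2\beta^3)=\beta(1-\beta)(5-2\beta)>0$ for $\beta\in(0,1)$, this already implies the stated constant, so no tuning is needed. Your handling of the leakage is the same as the paper's: gross outflow suffices for a lower bound, and the moving average turns $\sum I_{(u,v)}[t]$ into $\sum w_{(u,v)}[t]$.
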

(Proof in Appendix~\ref{s:proofofthemrewardingdecent}).

\subsection{Attacks}
\smallskip
\noindent
{\bf Ossification attack by inclusion.}
A set of malicious nodes $V\backslash S$ can collaborate only with a small, targetted set of victim nodes $S' \subset S$ without collaborating with any node from $S \backslash S'$. 
The idea here being if we consider the overall set $S' \cup (V\backslash S)$, it has a poor conductance in $G[t]$.
However, if the set $S'$ has a strong conductance in $G'[t](S)$ (i.e., $S'$ is well connected with $S\backslash S'$), it will still be included in the maximal expander subgraph $G'[t]$. 
The nodes in $V\backslash S$ will be excluded from $G'[t]$. 
This attack, therefore, would not work. 

\smallskip
\noindent
{\bf Ossification attack by exclusion.}
Another attack is the nodes in $V\backslash S$ can collaborate only with nodes in $S \backslash S'$, and not collaborate with any node from $S'$ for a target set $S' \subset S$. 
In this case, even if $G[t](S)$ and $G[t]((S\backslash S') \cup (V\backslash S))$ are expanders, the overall graph $G[t]$ may not be an expander. 
Therfore, it is possible that some nodes in $S'$ are excluded from $G'[t]$ (some or all nodes in $V\backslash S$ may be included in $G'[t]$). 
Suppose $V\backslash S$ has a total stake of $33\%$. 
For this attack to work, the target set $S'$ must be sufficiently large. 
E.g., if $S'$ contains just 1\% of the stake, the overall graph $G[t]$ might still be an expander. 
However, if $S'$ includes, say, $33\%$ of stake then it is possible that some nodes of $S'$ are excluded from $G'[t]$. 
We believe the solution to this attack is sociological, rather than algorithmic. 
When a collaboration rift between two (or, more) factions is detected in the network, it is in the interest of the neutral set of nodes $(S\backslash S')$ to ensure that the stake of the maximal expander in $G[t]$ does not drop below $50\%$. 
Moreover, the presence of non-collaborating factions increases the centralization of the network which ultimately diminishes the value of the blockchain to end users. 
Keeping these considerations in mind, the neutral set of nodes can decide on an appropriate collaboration strategy for the network, such as (1) encouraging the factions to collaborate with each other, or be isolated, (2) choose one of the factions and increase collaborations with them while reducing collaborations with the other. 



\smallskip
\noindent
{\bf Betrayal attack.}
An attacker can pretend to be a trustworthy node(s) and collaborate with a large fraction of the honest nodes. 
Eventually, the attacker deliberately includes an illegal operation in a block of an honest publisher (e.g., invalid signature) with whom the attacker is collaborating and causes the publisher to get slashed. 
The attacker than attempts to create a new public key identity, and transfers all of its stake to it. 
The attacker then tries to rejoin the network as a new person. 
However, unlike stake, transferring importance (1) takes time, and (2) cannot be anonymized (i.e., no anonymous mixers are available for importance transfer).  
Therefore, the network can easily know the new identities the attacker is trying to take and blacklist them too.


\smallskip
\noindent
{\bf Edge weight amplification attack.}
The weight $w_{(u,v)}[t]$ of an edge $(u,v) \in G[t]$ has memory of past importance transfers from $u$ to $v$. 
Suppose the nodes $v, v_1, v_2, \ldots, v_k$ are all controlled by the attacker. 
From $v$, the attacker can transfer $i$ amount of importance to $v_1$ which can then send the importance back to $v$. 
Node $v$ can then send this importance to $v_2$ which, once again, can send the importance back to $v$, and so on. 
If the memory of edge weights is long, it would appear as if node $v$ has transferred a net of $ki$ importance to other nodes, when, in fact, it is the same pot of importance that is being sent back and forth. 
If honest nodes do not engage in this behavior, the edge weights of the outgoing edges from the attacker nodes can be severely inflated in value giving the attackers an unfair advantage.
However, in our proposed solution the rate at which the edge weight memory decays (i.e., $\alpha$) is exactly the same as the rate at which a node can transfer its importance to another node. 
Therefore, by the time it takes for node $v$ to substantially transfer its importance to node $v_1$, the memory of the previous transfer that node $v$ made of the same importance decays down to zero in the edge weight. 
We leave a rigorous analysis of these attacks to future work. 

\smallskip
\noindent
{\bf Namesake collaboration.}
Nodes can form coalitions (pools) with each pool having a leader. 
Nodes retain their stake and importance and do not transfer them to their pool leader. 
The pool leaders negotiate collaboration agreements with each other on behalf of their pool members. 
Pool leaders may also do the bulk of the actual collaboration work (e.g., transaction execution), requesting their members only for a signature at the end. 
Members transfer a portion of their reward to the pool leader. 
In this case, the true trust relationships exist only between the pool leaders, and between a pool leader and its pool members, even if the collaboration graph $G[t]$ looks like an expander.
However, a pool member that is capable of doing the computational work by itself can bypass its pool leader and engage in the collaboration by itself, thus avoiding the pool leader fees.  
To prevent ``lazy'' pool members from delegating the collaboration process to their pool leaders, a stronger collaboration verification system beyond just signatures may be necessary.

\section{Scaling Blockchain Performance} 
\label{s:scalingperf}

Collaborations in a decentralized 
blockchain can be used to scale the performance of the blockchain. 
In the following we outline a method for scaling throughput; as before, other solutions are possible. 
Our method is inspired by blockchain sharding, but has subtle differences. 

We partition the public key address space into $k$ {\em tracks}, for a global parameter $k \in \mathbb{N}, k > 0$. 
The parameter $k$ can be increased or decreased as needed by the network. 
A node in the network, depending on its capacity, stores the states and processes transactions from only a small number (e.g., 1) of tracks. 
A node is free to choose which track(s) it desires to operate on. 
There is a single global blockchain (unlike sharding), with blocks published by proposers following a PoS protocol. 
A block proposer collaborates with other nodes operating in different tracks to collaboratively build a block.
Each block has a sequence of transactions, where a transaction can be from any track.  
Note that publishers are not required to collaborate or include transactions from all tracks, though it is in their interest to do them.  
Cross-track transactions are divided into multiple single track transactions (e.g., with the help of bridge addresses). 
All component transactions of a cross-track transaction are included and executed within the same block. 
Nodes operating in a track are connected to other nodes in the track through a p2p network.
After a block is prepared, a collaborator node collects transactions in the block that belong to its track and broadcasts them in its network. 
The block proposer is slashed if the block contains an invalid transaction or execution. 
If a node does not have the capacity to store data from all tracks, it can verify the correctness of a block only with the help of collaborating nodes from other tracks. 
Depending on the number of tracks and size of collaborations desired, the throughput of the blockchain can be significantly increased. 
Our solution can be likened to a multi-threaded execution model, where each collaborator processes a single (or, a few) thread (track) within a block. 

A fundamental problem in blockchain sharding is how to avoid a (super-)majority of validators in a shard from being adversarial. 
Common solutions to this problem include frequently and randomly re-assigning nodes to different shards; and, hiding the identity of the nodes until after they have proposed their block~\cite{chen2016algorand}.
In our proposed solution, as long as there is at least one collaboration of nodes that are all honest and which collectively operate on all tracks, any invalid transaction or execution in any of the tracks can be caught and the publisher of that block can be slashed.  
Another fundamental problem in sharding is ensure atomicity of cross-shard transactions. 
In our solution, cross-track transactions are executed on a single block providing atomicity. 

The mechanisms needed to prevent ossification as outlined in \S\ref{sec:blockunossify} are necessary for the scaling method in this section to work. 
If blocks are allowed to be collaboratively built without the mechanisms of \S\ref{sec:blockunossify}, a node would be incentivized to join an ossified pool. 
This increases the amount of ossification in the network making the system centralized. 

\section{Interactive Decentralized Applications}

A blockchain is inherently a collaborative application as proposers collaboratively build a ledger of transactions. 
However, the amount of interaction that happens among the nodes during this collaboration is poor. 
A block is constructed by a node independently without any interaction with the other nodes. 
After the block is constructed it is broadcast to the network. 
As a stylized example, if there were a network switch available to which all nodes are connected, then a node would only ever need to interact and address messages to this network switch for broadcast. 
In contrast, collaborations in an ossified coalition often involve a significant amount of direct peer-to-peer and group interactions. 
Our model of a decentralized 
blockchain allows for interactive applications in a natural way. 

Consider a decentralized 
blockchain comprising of free agents with each free agent having certain resources (e.g., compute). 
When free agents collaborate, they pool their resources and interact to provide a service.
Any service that can be provided collaboratively by free agents, and which can be verified by collaborations of other free agents can be realized on the blockchain.  
In general, the more the resources that are pooled in a collaboration, the `better' is the scope and quality of services that can be provided. 

Pooling resources to enhance the service quality is hardly new---many centralized organizations (e.g., cloud providers) do this at scale and with an efficiency that cannot be matched by collaborations whose resources are separated geographically and connected through low-bandwidth public Internet links.  
We claim that trying to mimic the applications offered by centralized providers on a blockchain can see widespread adoption only when the quality of service of the application itself is less important than the application being decentralized.  
For most people and for most applications existing today, this property seems to be not true. 
E.g., the number of users of centralized social media is significantly higher than the number of users using decentralized social media. 

When free agents collaborate they not only pool their physical resources they also pool their human capabilities including the intelligence, skills and expertise that they have. 
These human capabilities can be used, for instance, to inform machine inputs for computing the overall service outputs. 
The presence of a wide range of human capabilities in the system, which can be pooled together through diverse, decentralized collaborations can be a valuable service in many domains to end users. 
Such a service is unlikely to exist through a centralized organization today; or, even if it exists the number, independence and diversity of free agents that can be supported on a blockchain can be significantly greater than those of the free agents operating out of a single centralized organization. 
This results in collaborations that have a lower overall bias and increased innovation in the decentralized service. 

All collaborations have value whether they happen due to ossification or not. 
In an ossified coalition, the outputs of the collaboration are colored by the objective of maximizing efficiency to succeed in the market.  
In an ossified market, coalitions are acutely aware of market trends and  what other coalitions are doing.  
The exit of one coalition from the market is generally favorable to the other coalitions.  
In a decentralized 
collaboration, the outputs of the collaboration are colored by the objective of providing a level of service that is acceptable to a majority of service providers in the overall decentralized system.
For the same type of service, what is acceptable to a majority of providers in a decentralized system can be very different from what is acceptable to a few (e.g., the leadership) in an ossified coalition. 
For example, a decentralized system may place a greater emphasis on shared principles, values, and practices even if those come at the cost of increasing service price or reducing profits.  
In our proposed method, the most important behavior needed to increase rewards is maintaining good trust relationships with other free agents. 
An entity (or a collection of entities) that seeks to compete with others by restricting collaborations to themselves will fail to get a good reward. 
Thus, our system discourages selfish behavior and forces participants to consider the overall wellbeing of the network. 
In certain cases, it may be counterproductive to use a large collaboration set.  
It is possible to limit the maximum size of collaborations allowed on our blockchain.

Collaborations between free agents from domains that don't normally interact can produce novel outcomes of benefit to end users.
While many experts have voiced the increasing need for such collaborations (e.g., see~\cite{alon2026remarks}), in practice lack of adequate economic incentives prevents them from happening.
An ``exploratory'' collaboration is financially risky and is  afforded only by large corporations today. 
A decentralized blockchain can provide a platform where the risk of exploration is absorbed by the other ``exploitative'' collaborations happening on chain, thus, encouraging innovation.  





\section{Preventing On-Chain Merchant Ossification}

Smart contracts have helped create markets for various services in multiple domains including DeFi, social networks, gaming, NFTs etc~\cite{ethereumapps}. 
E.g., prominent services offered by major providers in the DeFi domain comprises of decentralized exchanges, lending and borrowing, liquid staking and restaking services (among others).
If we focus on any one type of service, we see that the market forces once again encourage ossification with a small number of big providers dominating the market~\cite{gogol2024sok}. 
On-chain markets are unique as services are guaranteed to be available as long as the blockchain is live, most services make their code open source, many services are governed by DAOs, and the market itself is permissionless with a low entry barrier. 
Nevertheless, when a large fraction of the market share is held by a single organization, it becomes susceptible to attacks such as token theft~\cite{beanstalk,caldarelli2021blockchain,bzxexploit,oraclemanipattack}, undesirable protocol changes~\cite{CompoundHack}, smart contract bugs~\cite{jiao2024survey} and rugpulls~\cite{anubisdao}.  
The smart contract services landscape can also be decentralized 
to prevent these attacks, and encourage greater collaboration between the service provider community. 

We follow the ideas of \S\ref{sec:blockunossify} to provide a solution sketch. 
Consider a set of providers $P$ providing an on-chain service (e.g., lending).
In a decentralized (Definition~\ref{defn: decentralization}) service, when a user submits a service request it is randomly routed to one of the provider's (e.g., with a probablity proportional to the provider's current market share) service smart contract. 
The chosen provider's smart contract collaborates with other providers' smart contracts to provide the service. 
As in \S\ref{sec:blockunossify}, we can assign an importance value and a market share value (analogous to stake in \S\ref{sec:blockunossify}) to each provider.  
The reward schemes can also be designed analogous to \S\ref{sec:blockunossify}. 
Note that unlike anti-trust laws, our mechanism does not seek to divide a large organization into smaller ones. 
If any of the providers in $P$ are already heavily ossified, we don't try to (nor can we) split it up. 
Instead our protocol merely encourages strong collaboration between the existing providers in $P$. 
In other words, we try to prevent any further ossification from happening in the future, but we cannot do anything about the ossification that has already happened in the past. 

To realize the above solution in today's blockchains, we would require additional smart contracts and methods. 
E.g., we would need a smart contract that receives user requests and directs them to a random provider's contract. 
We would need to keep track of the provider set $P$ and do all the computations associated with calculating the importance.  
Doing all computations on chain may be expensive in existing blockchains.
Since all on-chain activity is public, it may be possible to do the importance computations offline and feed it to the chain via oracles. 
It is also essential that the service operations are clearly defined with standardized interfaces so that multiple providers can  inter-operate seamlessly. 
To derive the full intended benefit of our decentralized service, it is important that providers develop their smart contracts independently without directly copying the source code from a single reference. 

\section{Decentralized Organizations}

Many organizations exist today that are collectively managed by its members---such as worker cooperatives or credit unions---and in the case of blockchains,  decentralized autonomous organizations (DAOs). 
In these collectives, key decisions about how to steer the organization are democratically made while the revenue earned is shared between the members. 
Depending on the organization's structure, a portion of the revenue may be directly awarded to the members; or, the members could indirectly benefit through increasing value of their membership stake (e.g., governance tokens).  

Each of these organizations can, fundamentally, be seen as a market where members compete for votes.
Consider a voting in a DAO for a decision where the possible choices are $A$ or $B$. 
Here, the proponents of $A$ and the proponents of $B$ are the two sellers.
The undecided voters are the buyers.
In many cases, whether $A$ wins or $B$ wins, to a large extent, depends on how effectively their respective proponents are able to convince the undecided voters on the benefits of those choices (i.e., the campaign).
As with other markets we have discussed, campaigning can heavily benefit from ossification. 
When proponents of a choice ($A$ or $B$) pool together their resources (time, money, effort etc.), they can execute a campaign that is much more efficient compared to what an individual member can do alone. 
Therefore, we claim that most---if not all---of the collectives today are internally ossified. 
In particular, measurement studies have shown that DAOs, contrary to their name, are not decentralized and often suffer from strong centralization with  a significant voting share controlled by just a small number of parties.

The ideas we have outlined in this paper allow us to construct collectives that are resistant to ossification and achieve decentralization in the sense of Definition~\ref{defn: decentralization}. 
Members can be incentivized to engage in unossified collaborative discussions about the voting issues at hand. 
Undecided voters can learn about the issues from the collaborations, before deciding what to vote for. 
Our incentives ensure that collaboration sets contain proponents from multiple voting choices, and not just a single choice. 
This provides an opportunity for the undecided voters to receive an unbiased opinion about the voting choices directly from the proponents of the choices. 

In some DAOs, revenue is split proportional to the number of governance tokens a member has. 
In addition to the voting market mentioned above, in these DAOs another market arises: a market for buying and selling governance tokens. 
At any time instant, there may be members that are looking to sell their governance token and others that are looking to purchase them. 
In this market, the sellers are those looking to purchase the governance tokens and the buyers are the those selling their tokens. 
The seller with the best bid (i.e., highest) receives the token. 
Once again, the efficiency of the seller (the bid value) can be improved through ossification. 
By pooling funds, an ossified coalition can purchase a significantly greater number of governance token than what an individual user can. 
Thus, in this market we can expect centralizing clusters of members to arise that own most of the governance tokens. 
If the members of a  coalition are ideologically aligned, the same coalition can engage as a single ossified entity in the token buy-sell market and in the voting campaign market. 
Here too, we can use our ideas to combat ossification. 



\section{Discussion}

Blockchains are trustless---it is sufficient for a user to have trust that the majority of users are honest without requiring to individually trust any user. 
This core principle has informed blockchain design and innovation over the years. 
Even in layer-2 methods where users are required to trust an individual provider (e.g., the sequencer in a rollup), mechanisms are provided through which the provider is slashed for misbehavior. 
Our proposed method is a marked departure from this status quo.  
Not only do we ask nodes to individually trust (a small number of) other nodes to form collaborations, we explicitly slash only the publisher even if it a collaborating node that provably deviated from protocol.
From the various examples highlighted in the paper, we have seen that despite being a trustless system, trust relationships in the form of ossified coalitions almost invariably occur because it is the rational action for users seeking to maximize their payoffs. 
These trust relationships are hard to detect, disincentivize, or ban. 
It is, therefore, natural to try to have mechanisms that encourage the trust relationships to be in a way that is beneficial for the overall system. 


Another key property of blockchains is pseudonymity. 
In our proposed method, users need a platform through which they can discover, interact, and form partnerships with other users. 
Even if the platform supports anonymous interaction, the process of collaboration can reveal private information about a user. 
How to enable users to collaborate while keeping their identities anonymous is an important question. 
Sybil users can flood the communication platform with spam. 
Filtering out the spam is necessary to allow for meaningful trust relationships to form between honest nodes. 

Historically, human societies and organizations have been structured to have a small (relative to the size of the organization),  centralized leadership. 
This is true even in the ossified coalitions occurring on blockchains.
When users are asked to collaboratively perform tasks, it is likely that a similar leadership-worker segregation structure emerges in the system. 
That is, some users may prefer to lead while others prefer to be led. 
This is particularly  important in applications where collaborations require significant human-to-human interaction. 
The impact of such a segregation on the decentralization and value provided by the network is an important aspect that needs to be studied. 

In our proposed method, a user can have a fixed set of collaborators as long as the overall collaboration graph is an expander. 
For certain applications, there may be value in frequently changing the collaboration set of a user over time. 
How to design the reward mechanism so that it looks at not only the connectivity of the collaboration graph, but also its dynamism is an interesting open question. 

\subsection{The Value of Decentralization}

Our work prompts a re-visit to the question of what is the value of decentralization in blockchains.
The common response to this question is: (1) decentralization eliminates single points of failure making the network more resilient to attacks and failures; (2) it provides transparency with verifiable audit trails over an immutable, public ledger; (3) it eliminates the need for trusted intermediaries for executing contracts; (4) it is resistant to censorship as no single entity controls access to the data; (5) it provides credible neutrality where the network's core algorithms apply equally to all participants without discrimination. 
Intuitively, a blockchain that is controlled by just 4 large coalitions with a 1000 validators per coalition is less decentralized than a blockchain with 4000 independent validators.
However, the above five principles are equally applicable to both of these cases as long as a (super-)majority of the validators are honest. 
The argument that the network with 4 coalitions is more susceptible to failures or attacks requires the additional assumption that members of a coalition share a common leadership, infrastructure, software etc. 
If members of a coalition don't share anything in common, except that each member has trust relationships only with other members of the same coalition, it is hard to argue that the scenario with the 4 coalitions is more susceptible to failures or attacks.\footnote{We assume a member is willing to interact only with a member it trusts. A global ``network switch''---to which all members are connected---is trusted by all members.} 
Therefore, there must be a benefit to decentralization beyond the five principles listed above. 

We identify {\em incentivized altruism} as the additional property satisfied by decentralized systems. 
In incentivized altruism, members perform actions that are beneficial to the overall network in exchange for a direct or indirect compensation. 
Unlike regular actions which provide an immediate profit for the member (e.g., mining a block), an altruistic action incurs a non-negative cost (positive or zero) for the member in the short term and may be profitable only in the long run. 
E.g., in today's blockchains (e.g., Bitcoin), nodes help forward blocks and other messages which benefits the entire network. 
Even if the forwarding node is not rewarded directly for its actions, if the node holds tokens it can indirectly benefit by contributing to an increase in the token's value by maintaining a healthy network. 
Users also altruistically contribute to the blockchain by writing open-source software for the clients, wallets, Web-2 interfaces etc. 
Members also help other members through messaging platforms like Discord, posting tutorials, and organizing conferences.
In our proposed method, altruism extends even further as nodes help other nodes in the service execution directly on chain.
On the other hand, in centralized coalitions 
members of a coalition interact with and help only members of the same coalition for service execution. 
The impact of (incentivized) altruism on provider behavior, service quality, and earned rewards is an interesting direction for future work. 

Altruism provide two distinct advantages to the network. 
First, nodes can enhance the amount of service and the quality of service provided by collaborating with other nodes.  
This is particularly useful for newly joined nodes who may not have adequate resources at the beginning. 
Second, altruism enhances the economic {\em stability} of the network. 
A node---even if it is a well-established one---can occasionally experience unforeseen technological or economic challenges (e.g., hardware failure due to a natural disaster, or a sudden financial hardship).
During such times, the altruistic benefactors can help dampen the service impact to end users by helping the affected node. In a way, the benefactors act as a low-premium insurance policy providing valuable assistance at times of need. 

The benefits of collaborative decentralization extend to security as well. 
In a network where nodes have a strong social collaboration, the negative impact of Sybil attacks can be significantly ameliorated. 
E.g., in a proof-of-work chain if a private chain attack originates from a user(s) outside of the core collaboration graph, it can be easily ignored even if it contains more work than the honest chain.  
Such overriding ``social consensus'' decisions have been made by members of various blockchains in the past on a need basis depending on the severity of the security issue.

\subsection{Cross-Chain Collaboration}

Ossification can occur at the level of blockchains, with different blockchains competing to offer the same type of services. 
Even if each blockchain is individually decentralized internally, ossification across chains can cause a small number of blockchains to emerge as the dominant providers with an overwhelming market share (e.g., today Bitcoin holds $> 50\%$ of the market capitalization across cryptocurrencies).
The full benefits of decentralization can be realized only if decentralization is present at all levels. 
This implies there must be a degree of collaboration and altruism present across blockchains, while avoiding the formation of  any blockchain-level coalitions. 
Within an individual blockchain, collaboration and altruism can be codified as law forcing this desired behavior from the users. 
Eliciting such behavior outside of a chain appears challenging. 
The difficulty arises due to two key questions: (1) what is the incentive for a blockchain to altruistically help other blockchains? (2) what does it mean (i.e., how can) for a blockchain to help other blockchains? 

An immediate but unsatisfactory answer to (1) is ``blockchains must help other blockchains because it is in the spirit of (collaborative) decentralization.'' 
Newly built chains can certainly benefit from the assistance of well-established chains. 
However, providing and receiving assistance can benefit even well-established chains. 
If each blockchain collaborates with other blockchains it trusts, the core group of blockchains in the blockchain-level collaboration graph can achieve greater resilience and stability to adverse events that would not be possible without collaboration. 
The core group can also be insulated against catastrophic events occurring outside of the group, such as the crash of a certain token. 
The overall public trust on the core group increases, which benefits all the members of the group. 

A blockchain can help another blockchain in many ways. 
It could provide financial assistance, investments, help with securing the other chain, share knowledge and experience, etc. 
We leave a systematic analysis of these potential solution ideas to both questions for future work. 





\section{Related Work}

Our work has connections to prior research and ideas from various disciplines including economics, sociology, psychology and political science.
We do not attempt to provide an exhaustive list of related works from those fields due to the authors' limited expertise in those areas.
We hope follow-up works from experts in these areas can help bridge any gaps to our work. 
Nevertheless, we provide a small sample of papers that are relevant. 
The concept of ossification, as presented in this paper, is a well-documented phenomenon in economics. 
Autor et al.~\cite{autor2020fall} provides an analysis on the rise of ``superstar'' firms and their impact on the labor share. 
The book Frank et al.~\cite{frank2010winner} discusses winner-take-all markets where even a small difference in performance results in enormous differences in the reward. 
Eisenmann et al.~\cite{eisenmann2006strategies} study the  strong network effects in two-sided markets. 
Haskel et al.~\cite{haskel2017capitalism} present the impact of the infinite-scalability of intangible assets and how they enable superstar firms.  

The necessity of altruism for a well-functioning economy is increasingly being discussed by economists~\cite{park2010altruistic}. 
Becker~\cite{becker1974theory} proposed the famous ``rotten kid theorem'' that shows even selfish members of a family will act for the welfare of the family if the family head is altruistic. 
Andreoni~\cite{andreoni1990impure} introduced the notion of a ``warm glow'' as an intrinsic utility derived by altruists challenging the pure selfless motivations of altruists. 
Fehr et al. \cite{fehr2005economics} rejects the long-held hypothesis that self-interest is the sole motivator for all people. 
It shows how individuals are sometimes willing to sacrifice personal gain to ensure equity, punish wrongdoers, or to help others. 
Bramoull\'e et al.~\cite{bramoulle2024altruism} study the 
production patterns that emerge when people are modeled as altruistic and care about the well being of friends and family. 
It shows how altruistic lending acts as a vital economic cushion during crises and allows the economy to rebound faster when the crises pass. 
Simon~\cite{herbert1957models,simon1990mechanism} argues that if individuals acted purely selfishly, the sheer computational resources required to design perfect, cheat-proof contracts and monitor compliance would overwhelm any economic network.

The idea that collaborating with random or loosely connected people sparks creativity has been well-studied in sociology, economics, and organizational psychology. 
Granovetter~\cite{granovetter1973strength} proposed that casual acquaintances (weak ties) are more valuable than close friends (strong ties) for moving new information and novel opportunities across different social networks. 
Burt~\cite{burt2004structural} identifies ``structural holes''---the empty space between disconnected groups of people---and argues that individuals who bridge these holes are more likely synthesize creative and valuable ideas. 
It is also worth mentioning John Stuart Mills’s~\cite{mill1885principles} opinion that ``it is hardly possible to overrate the value ... of placing human beings in contact with persons dissimilar to
themselves, and with modes of thought and action unlike those with which they are familiar .... Such communication has always been, and is peculiarly in the present age, one of the primary sources of progress.''
Fleming~\cite{fleming2007breakthroughs} proposed that the vast majority of breakthroughs are not entirely new to the world; rather, they are novel, often surprising recombinations of existing technologies or concepts. 
It argues that teams with varied technical expertise and backgrounds are much more likely to generate extreme creative outcomes.

Yokoo et al.~\cite{yokoo2005coalitional} analyze solution concepts for coalition games in a Sybil (termed a ``false name'' attack in the paper) setting. 
They show that the classical Shapley value is not Sybil resistant, and propose a novel solution concept that is Sybil and coalition resistant. 
The paper does not provide an efficient algorithm for computing the proposed solution.
The paper also shows that Sybil resistance can be achieved by defining characteristic functions on ``skills'' (i.e., stake) rather than on agents. 
Later works have developed compact representations and linear-programming based algorithms for computing the solution, but it remains exponential time in the worst case~\cite{ohta2006compact,ohta2008anonymity}.
Lee et al.~\cite{lee2026faithful} provide an efficient approximation algorithm for computing their proposed ``faithful'' Shapley value that is resistant to false-name attacks. 
Roig et al.~\cite{mazorra2023towards} show that there is no solution that simultaneously satisfies efficiency, symmetry, null player, additivity, and Sybil-proofness.

In the domain of blockchains, reputation systems have been used as a basic tool for determining the trustworthiness of users~\cite{hasan2022privacy}.
A reputation score for a user is computed based on the feedback (positive or negative) received about the user from other users. 
Our idea of importance differs from a reputation score in that importance is computed algorithmically based on a user's behavior while reputation is computed from user feedback. 
Decentralized reputation systems use a blockchain for recording and updating user reputation scores~\cite{dennis2015rep,zhou2021blockchain}. 
Key research works in this direction have proposed solutions for dealing with Sybil attacks, collusion attacks, re-entry attacks, and bad-mouthing attacks~\cite{bellini2020blockchain}.
Reputation has also been used as an alternative to stake for developing proof of reputation consensus protocols~\cite{de2020blockchain,zhuang2019proof}. 

A number of recent works have proposed measures for mitigating centralization in blockchains~\cite{kiffer2022centralization}. 
Kiayias et al. study reward sharing schemes modeled as a oceanic game~\cite{kiayias2025pool}. 
Comparing the Shapley mechanism against the
standard proportional scheme, they show that the Shapley value is a competitive alternative that has increased Sybil resistance while providing decentralization.
Middleware solutions such as Eigenlayer~\cite{li2024mitigating,team2024eigenlayer} promote decentralization by allowing validators to use their same staked tokens for securing multiple services. 
Liquid staking protocols (e.g., Lido) allows validators to retain their ETH even while staking them, thereby lowering the barrier to becoming a validator~\cite{scharnowski2025economics}. 
For proof-of-work chains, Miller et al.~\cite{miller2015nonoutsourceable} propose ``non-outsourceable puzzles'' which create a
disincentive for pool operators to outsource mining work. 
Distributed mining pools~\cite{sakurai2025fiberpool} use a decentralized coordinator (e.g., implemented as a smart contract) to replace the centralized coordinator of conventional pools.  

The idea that Sybil nodes have weak social relationships with honest nodes has been used to design Sybil protection systems~\cite{yu2006sybilguard,yu2008sybillimit}. 

\bibliographystyle{splncs04}
\bibliography{main}

\appendix

\section{Importance Transfer in a DAG} \label{apx:imptransfDAG}

\subsection{Preliminaries}
Let us assume all nodes have the same stake for the time being. 
Consider a DAG $D[t]$ of collaborators with a root node $V[t]$. 
The importance of a node $v \in D[t]$ after tax is $I_v[t](1-\beta)$. 
For any subset $S \subseteq D[t]$, define the utility $u(S)$ of $S$ as the total importance of all the nodes reachable from $S$ (including the total importance of $S$ itself), i.e., 
\begin{align} 
u(S) = \sum_{\substack{v \in D[t]: \exists u \in S \text{ such that } \\ v \text{ is reachable from } u}} I_v[t](1-\beta).  \label{eq:utilityfndag}
\end{align}
We define $u(\{ \}) = 0$. 
We have the following proposition. 
\begin{prop}
The utility function $u: 2^{D[t]}\rightarrow \mathbb{R}$ defined in Equation~\eqref{eq:utilityfndag} is a monotone, submodular function. 
\end{prop}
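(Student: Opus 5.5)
We will write $u$ as a weighted coverage function and then read both properties off this representation. For $S \subseteq D[t]$, let $R(S)$ denote the set of nodes of $D[t]$ reachable from some node of $S$, where each node counts as reachable from itself. Then $R(S) = \bigcup_{x\in S} R(\{x\})$, and with the nonnegative weights $a_v := I_v[t](1-\beta) \geq 0$ (recall $I_v[t]\geq 0$ and $\beta\in(0,1)$) we have
\begin{align*}
u(S) = \sum_{v\in R(S)} a_v .
\end{align*}
This is consistent with the convention $u(\{\}) = 0$, since $R(\{\})=\emptyset$.

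\textbf{Monotonicity.} If $S\subseteq T$, then every node reachable from $S$ is reachable from $T$, so $R(S)\subseteq R(T)$. Because all weights $a_v$ are nonnegative, $u(S)\le u(T)$.

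\textbf{Submodularity.} We will use the diminishing-returns form. Let $S\subseteq T\subseteq D[t]$ and $x\in D[t]\setminus T$. The marginal gain of adding $x$ to $S$ is
\begin{align*}
u(S\cup\{x\})-u(S) = \sum_{v\in R(\{x\})\setminus R(S)} a_v .
\end{align*}
The same formula holds with $T$ in place of $S$. Since $R(S)\subseteq R(T)$, we have $R(\{x\})\setminus R(T) \subseteq R(\{x\})\setminus R(S)$. With nonnegative weights this gives
\begin{align*}
u(T\cup\{x\})-u(T) \le u(S\cup\{x\})-u(S),
\end{align*}
which is submodularity. Equivalently, one can verify $u(A)+u(B)\ge u(A\cup B)+u(A\cap B)$ by noting that $R(A\cup B)=R(A)\cup R(B)$ and $R(A\cap B)\subseteq R(A)\cap R(B)$, then applying inclusion–exclusion on the weighted sets.

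The only step needing care is the identity $R(S\cup\{x\}) = R(S)\cup R(\{x\})$, i.e., that reachability from a union is the union of the reachable sets. This holds immediately from the definition, since a node is reachable from the set exactly when it is reachable from one of its members. The acyclic structure of $D[t]$ is not needed for either property. No real obstacle remains.
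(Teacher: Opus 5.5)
Your proof is correct and follows essentially the same route as the paper. Both arguments write the marginal gain of adding a node as the weight of the nodes reachable from it but not from the current set, note that this set can only shrink as the base set grows, and get monotonicity from $R(S)\subseteq R(T)$; you also make explicit the nonnegativity of the weights $I_v[t](1-\beta)$, which the paper uses without stating it.
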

\begin{proof}
Consider any two subsets $S_1, S_2$ with $S_1 \subseteq S_2 \subset D[t]$. 
Let $v \in D[t] \backslash S_2$ be any node not in $S_2$. 
Let $T_1$ be the set of nodes reachable by $v$ (including $v$) but not by $S_1$. 
Similarly, let $T_2$ be the set of nodes reachable by $v$ (including $v$) but not by $S_2$. 
It is easy to see that 
\begin{align}
u(S_i \cup \{v \}) - u(S_i) = \sum_{v' \in T_i} I_{v'}[t](1-\beta),    
\end{align}
for $i=1,2$. 
Consider any $v' \in T_2$. 
This means there is no node from $S_2$ that can reach $v'$. 
Since $S_1 \subseteq S_2$, it follows that there is no node from $S_1$ that can reach $v'$. 
Therefore, $v' \in T_1$. 
Since any node $v' \in T_2$ is also in $T_1$, we have $T_2 \subseteq T_1$. 
Thus, 
\begin{align}
u(S_1 \cup \{ v\}) - u(S_1) \geq u(S_2 \cup \{v \}) - u(S_2),    
\end{align}
which shows that $u$ is submodular. 
We also have that $u(S_1) \leq u(S_2)$ since any node that reachable by $S_1$ is also reachable by $S_2$. 
This shows that $u$ is monotonic. 
\qed 
\end{proof}
We are interested in transferring the importance between different nodes in $D[t]$. 
For any $v \in D[t]$, let $I'_v[t]$ be the importance of $v$ after the transfer. 
To compute how importance is transferred, we consider the convex region defined by the following, 
\begin{align}
\sum_{v \in S} I'_v[t] &\leq u(S),      ~~~~ \forall S \subset D[t] \label{eq:polytope1} \\
\sum_{v \in D[t]} I'_v[t] &= u(D[t]). \label{eq:polytope2}
\end{align}
The inequalities~\eqref{eq:polytope1} and~\eqref{eq:polytope2} define the base of the polymatroid defined by $u$. 
We choose a point on the base that satisfies Properties~\ref{prop:convimp}--\ref{prop:sybil}. 

For any permutation $\sigma$ of the nodes in $D[t]$, the greedy solution given by 
\begin{align}  I'_{v_{\sigma(i)}} = u(\{ v_{\sigma(1)}, v_{\sigma(2)}, \ldots, v_{\sigma(i)} \}) - u(\{ v_{\sigma(1)}, v_{\sigma(2)}, \ldots, v_{\sigma(i-1)} \}), 
\end{align}
for all $i \in \{1,2,\ldots,|D[t]|\}$ is a corner point on the base of the polymatroid. 
Conversely, for any corner point of the base, there exists a permutation whose solution coincides with the corner point~\cite{schrijver2003combinatorial}.
For a permutation $\sigma$, let $I'(\sigma)$ be the solution vector obtained by the greedy algorithm following the order $\sigma$. 
By the convexity of the base of the polymatroid, for any two permutations $\sigma_1$ and $\sigma_2$, the solution $\delta I'(\sigma_1) + (1-\delta) I'(\sigma_2)$ for $\delta \in (0,1)$ is also on the base. 
The Shapley value~\cite{winter2002shapley} is the average of $I'(\sigma)$ averaged over all possible permutations.  
In addition to being computationally expensive, the Shapley value does not guarantee that the solution computed is fair or Sybil resistant. 

Our solution uses the DAG topology to derive a weighting for the permutations such that the resultant solution is fair and Sybil resistant. 
For any node $v \in D[t]$, let $\Gamma(v)$ denote the children of $v$ in $D[t]$. 
We hierarchically construct the permutations and their weights as follows. 
\begin{enumerate}
\item We first partition $D[t]$ into two sets: the root $V[t]$ and the rest of the DAG $D[t]\backslash V[t]$. 
The sequence $(V[t], D[t] \backslash V[t])$ receives a weight of $\alpha$, while the sequence $(D[t]\backslash V[t], V[t])$ receives a weight of $1-\alpha$. 
The set $D[t]\backslash V[t]$ will be further partitioned and ordered in the subsequent steps with additions weights. 
\item Let $v_1, v_2, \ldots, v_{|\Gamma(V[t])|}$ be the nodes in $\Gamma(V[t])$. 
For any $v_i$ for $i=1,2,\ldots,|\Gamma(V[t])|$, let $R(v_i)$ be the set of nodes reachable from $v_i$ in $D[t]$ including $v_i$. 
Consider any permutation $\sigma$ of the nodes in $\Gamma(V[t])$. 
We partition and order $D[t]\backslash V[t]$ as $(R(v_{\sigma(1)}), R(v_{\sigma(2)}) \backslash R(v_{\sigma(1)}), R(v_{\sigma(3)}) \backslash \{ R(v_{\sigma(1)}) \cup R(v_{\sigma(2)})\}, 
\ldots)$ with each permutation receiving a weight of $1/|\Gamma(V[t])|!$. 
If any of $R(v_{\sigma(i)}) \backslash \{ R(v_{\sigma(1)}) \cup R(v_{\sigma(2)}) \ldots R(v_{\sigma(i-1)})\}$ is empty we omit that entry in the ordering. 
\item For any $i=1,2,\ldots, |\Gamma(V[t])|$, the subgraph induced by the vertices $R(v_{\sigma(i)}) \backslash \{ R(v_{\sigma(1)}) \cup R(v_{\sigma(2)}) \ldots R(v_{\sigma(i-1)})\}$ is a DAG with root $v_{\sigma(i)}$. 
We repeats steps 1 and 2 for all the $|\Gamma(V[t])|$ DAGs. 
\end{enumerate}

\subsection{Simplifying the Algorithm} \label{sec:simplifyalg}

Consider a node $v \in D[t]$ and any permutation $\sigma$ of the nodes in $D[t]$. 
In $\sigma$, let $p(v)$ be the first appearing predecessor node of $v$ (i.e., any node that can reach $v$ in $D[t]$). 
If there are no predecessor nodes appearing before $v$ in $\sigma$, we set $p(v) = \phi$ (null).
If $p(v)$ is not null for $\sigma$, then in the greedy solution for $\sigma$ the importance of $v$ is counted towards the solution for $p(v)$, i.e., $I'_{p(v)}[t]$ for $\sigma$ is a summation of importance of nodes one of which is $v$. 
If $p(v)$ is null, the importance of $v$ is counted towards the solution for $v$. 
In any of our permutations, by construction, the importance of $v$ can only be counted towards the solution for $v$ or any of $v$'s predecessors. 

For any specific predecessor $u$ of $v$, let $Q(u) \subset D[t]$ be the nodes in $D[t]$ than can reach $u$ (excluding $u$). 
For $p(v)$ to be equal to $u$ in $\sigma$, none of the nodes in $Q(u)$ must appear before $u$ in the permutation. 
Consider any path $V[t], v_1, v_2, \ldots, v_k, u$ from the root to $u$. 
The chance of the root $V[t]$ appearing after $u$ in the permutation is $(1-\alpha)$. 
The chance of both $V[t]$ and $v_1$ appearing after $u$ in the permutation is $(1-\alpha)^2/|Q(u) \cap \Gamma(V[t])|$. 
Similarly, for any $v_i$ on the path, the chance of $V[t], v_1, \ldots, v_i$ appearing after $u$ in the permutation is $(1-\alpha)^{i+1}/(|Q(u) \cap \Gamma(V[t])| * |Q(u) \cap \Gamma(v_1)| * \ldots * |Q(u) \cap \Gamma(v_{i-1})|)$. 
Finally the chance of $u$ appearing before $v$ but $V[t], v_1,\ldots, v_k$ appearing after $v$ in the permutation is $(1-\alpha)^{k+1}\alpha/(|Q(u) \cap \Gamma(V[t])| * |Q(u) \cap \Gamma(v_1)| * \ldots * |Q(u) \cap \Gamma(v_{k})|)$. 

Let $\mathcal{P}(u)$ be the set of all paths from the root $V[t]$ to $u$ in $D[t]$. 
The total fraction of importance that node $v$ ``pays'' to node $u$ in the asymmetric Shapley value solution is 
\begin{align}
\sum_{(V[t], v_1, \ldots, v_k, u) \in \mathcal{P}(u)} \frac{(1-\alpha)^{k+1}\alpha}{(|Q(u) \cap \Gamma(V[t])| * |Q(u) \cap \Gamma(v_1)| * \ldots * |Q(u) \cap \Gamma(v_{k})|)}. \label{eq:importancetransferuv}
\end{align}
Instead of enumerating all permutations and explicitly computing the greedy solutions and then averaging them to compute the Shapley value solution, we can instead adopt the following simpler but equivalent solution motivated by Equation~\eqref{eq:importancetransferuv}. 
Recall $I_v[t](1-\beta)$ is the initial importance of a node $v$ after tax but before the DAG importance transfer. 
Consider the DAG $D_{Q(v)}[t]$ induced by $Q(v)$ on $D[t]$. 
For any edge $(u, u') \in D_{Q(v)}[t]$ associate an edge weight $w(u,u') = (1-\alpha)/|\Gamma(u)|$. 
The payment algorithm for any node $v \in D[t]$ is as follows. 
\begin{enumerate}
\item All edges in $D_{Q(v)}[t]$ are marked ``unvisited''. 
The importance $i_u$ received by any node $u$ in $D_{Q(v)}[t]$ from $v$ is set to 0. 
\item $v$ pays an amount $I_v[t](1-\beta) \alpha$ to the root $V[t]$, i.e., $i_{V[t]} \leftarrow I_v[t](1-\beta) \alpha$.  
\item Consider any unvisited edge $(u, u') \in D_{Q(v)}[t]$ such that all incoming edges to node $u$ are visited.
Let $i_u$ be the total importance received by $u$ from $v$. 
The importance received by $u'$ from $v$ is incremented as $i_{u'} \leftarrow i_{u'} + i_u * w(u, u')$. 
Mark edge $(u, u')$ as visited.
\item Repeat step 3 until all edges in $D_{Q(v)}[t]$ are visited. 
\end{enumerate}

\subsection{Accounting for Stake} \label{sec:accforstake}

The algorithm presented in \S\ref{sec:simplifyalg} assumes all nodes have the same stake. 
In this section, we explain how the previous algorithm can be modified if nodes have heterogeneous stake. 
For any node $v \in D[t]$, let $s_v$ be the stake of $v$. 
Without loss of generality, assume $s_v$ is an integer multiple of a small constant $\epsilon$ for all $v$. 
Given a collaboration dag $D[t]$ where the nodes have heterogeneous stake, we convert it into another DAG $D'[t]$ where the nodes have homogeneous stake. 
Consider a node $v \in D[t]$ with stake $s_v$. 
Let $\Gamma_{\text{in}}(v)$ be the set of parents of $v$ and $\Gamma(v)$ is the set of children of $v$ in $D[t]$. 
To compute $D'[t]$ we replace each node node $v$ by a chain of $s_v / \epsilon$ nodes with a total importance of $I_v (1 - \beta)$ (how the importance is distributed among the nodes in the chain does not matter).
The last node of $v$'s chain has outgoing edges to the first node of the chains of $v$'s children in $D[t]$. 
We then apply the algorithm of \S\ref{sec:simplifyalg} on $D'[t]$. 
The modified algorithm for $D[t]$ is given below. 
\begin{enumerate}
\item All edges in $D_{Q(v)}[t]$ are marked ``unvisited''. 
The importance $i_u$ received by any node $u$ in $D_{Q(v)}[t]$ from $v$ is set to 0. 
\item $v$ pays an amount $I_v[t](1-\beta) (1 - (1-\alpha)^{s_{V[t]}/\epsilon} ) $ to the root $V[t]$, i.e., $i_{V[t]} \leftarrow I_v[t](1-\beta) (1 - (1-\alpha)^{s_{V[t]}/\epsilon} )$.  
\item Consider any unvisited edge $(u, u') \in D_{Q(v)}[t]$ such that all incoming edges to node $u$ are visited.
Let $i_u$ be the total importance received by $u$ from $v$. 
The importance received by $u'$ from $v$ is incremented as 
\begin{align}
i_{u'} \leftarrow i_{u'} + \frac{i_u (1-\alpha)^{s_u/\epsilon}}{1-(1-\alpha)^{s_u/\epsilon}}  * \frac{(1-(1-\alpha)^{s_{u'}/\epsilon})}{|\Gamma(u)|}.
\end{align}
Mark edge $(u, u')$ as visited.
\item Repeat step 3 until all edges in $D_{Q(v)}[t]$ are visited. 
\end{enumerate}

\subsection{Proof of Theorem~\ref{thm:importancetransfer}} 
\label{apx:proofofthm1}
\begin{proof}
Since our importance transfer solution is a point on the base of the submodular polymatroid, Property~\ref{prop:convimp} is trivially true. 

Next, we show Property~\ref{prop:fairness}. 
Consider any path-closed subset $S \subset D[t]$ containing $V[t]$.   
This means for any node $v \in S$ all the predecessors of $v$ are also in $S$. 
By the algorithm in \S\ref{sec:accforstake}, $v$ transfers its importance to its predecessors, which are nodes in $S$. 
None of the nodes $S$ transfer their importance to a node outside of $S$. 
Therefore, $\sum_{v \in S}(I'_v[t] - I_v[t](1-\beta )) \geq 0$. 
On the other hand, the nodes in $D[t] \backslash S$ transfer importance to their predecessors in $S$. 
However, the nodes in $D[t] \backslash S$ do not receive any additional importance from any node outside of $D[t] \backslash S$. 
Hence, $\sum_{v \in D[t]\backslash S} (I'_v[t] - I_v[t](1-\beta)) \leq 0$.

To prove Equation~\eqref{eq:fariness2}, consider the predecessors $Q(v)$ of any node $v \in D[t]$. 
Let $u \in Q(v)$ be any node. 
The fraction of importance transferred from $v$ to $u$ is 
\begin{align}
&\sum_{\text{path }(V[t], u_1, \ldots, u_k, u)) \in D[t]} \frac{(1-\alpha)^{\frac{s_{V[t]} + \sum_{i=1}^k s_{u_i}}{\epsilon}}(1-(1-\alpha)^{\frac{s_u}{\epsilon}})}{|Q(u) \cap \Gamma(V[t])|*|Q(u) \cap \Gamma(u_1)| * \ldots * |Q(u) \cap  \Gamma(u_k)|} \notag \\
&\leq \sum_{\text{path }(V[t], u_1, \ldots, u_k, u)) \in D[t]} \frac{(1-(1-\alpha)^{\frac{s_u}{\epsilon}})}{|Q(u) \cap \Gamma(V[t])|*|Q(u) \cap \Gamma(u_1)| * \ldots * |Q(u) \cap  \Gamma(u_k)|} \notag \\
&= (1-(1-\alpha)^{\frac{s_u}{\epsilon}}). \label{eq:impupperbound}
\end{align}
For any node $u \in D[t] \backslash Q(v)$, the importance transferred from $v$ is 0. 
Hence, 
\begin{align}
I'_v[t] &\geq I_v[t](1-\beta)(1 - \sum_{u \in Q(v)} (1 - (1-\alpha)^\frac{s_u}{\epsilon}) )   \notag \\
&\geq I_v[t](1-\beta )(1 - \sum_{u \in Q(v)} (1 - (1-\frac{\alpha s_u}{\epsilon})) ) \notag \\
&= I_v[t](1-\beta )(1 - \frac{\alpha}{\epsilon}\sum_{u \in Q(v)} s_u  ).  
\end{align}

Next, we show Property~\ref{prop:slowtransf}. 
For any two nodes $u, v \in D[t]$, by the same argument as used for Equation~\eqref{eq:impupperbound}, the total fraction of importance transferred from $v$ to $u$ is at most $(1-(1-\alpha)^\frac{s_u}{\epsilon}) \leq \alpha s_u / \epsilon$. 
The total amount of importance received by $u$ from $v$ is at most $I_v[t](1-\beta) \alpha s_u /\epsilon$. 
The total amount of importance received by $u$ is bounded as
\begin{align}
\sum_{v \in D[t]} I_v[t](1-\beta) \frac{\alpha}{\epsilon} s_u &\leq \frac{\alpha}{\epsilon} \sum_{v \in D[t]} I_v[t](1-\beta) s_u \notag \\
& \leq \frac{\alpha}{\epsilon} \left( \sum_{v \in D[t]} I_v[t](1-\beta) \right) \left( \sum_{v' \in D[t]} s_{v'} \right). 
\end{align}

Next, we show Property~\ref{prop:sybil}. 
The fraction of importance received by $v$ before being replaced by a DAG from any node $u \in R(v)$ is 
\begin{align}
\sum_{\text{path }(V[t], u_1, \ldots, u_k, v)) \in D[t]} \frac{(1-\alpha)^{\frac{s_{V[t]} + \sum_{i=1}^k s_{u_i}}{\epsilon}}(1-(1-\alpha)^{\frac{s_v}{\epsilon}})}{|Q(v) \cap \Gamma(V[t])|*|Q(v) \cap \Gamma(u_1)| * \ldots * |Q(v) \cap  \Gamma(u_k)|}. \label{eq:importancefractov}
\end{align}
Let $A[t]$ be the DAG that node $v$ is replaced as and $w$ a parent of node $v$ in $D[t]$. 
If $v$ is replaced by $A[t]$ and $u_0$ is a root of $A[t]$, for any path $(V[t], u_1, \ldots, u_k, v)$ to $v$ in the original DAG, the term $(1-(1-\alpha)^\frac{s_v}{\epsilon})/|Q(v)\cap \Gamma(u_k)|$ in Equation~\eqref{eq:importancefractov} is replaced by 
\begin{align}
\sum_{v' \in A[t] } \sum_{(u_0, u_1, \ldots, u_k, v')) \in A[t]} &\frac{(1-\alpha)^{\frac{ \sum_{i=0}^k s_{u_i}}{\epsilon}}(1-(1-\alpha)^{\frac{s_{v'}}{\epsilon}})}{|Q(v') \cap \Gamma(w) ||Q(v') \cap \Gamma(u_0)|* \ldots * |Q(v') \cap \Gamma(u_k)|}  \notag \\
& \leq \sum_{v' \in A[t]} \frac{(1-(1-\alpha)^\frac{s_{v'}}{\epsilon})}{|Q(v') \cap \Gamma(w)|} \leq \sum_{v' \in A[t]} \frac{(1-(1-\alpha)^\frac{s_{v'}}{\epsilon})}{|Q(v) \cap \Gamma(w)|} \notag \\
& \leq \frac{(1-(1-\alpha)^\frac{s_v}{\epsilon}) + \frac{\alpha^2}{2}(\frac{s_v^2}{\epsilon^2} - \sum_{v' \in A[t]}\frac{s_{v'}^2}{\epsilon^2}) + O(\alpha^3)}{|Q(v) \cap \Gamma(w)|} \label{eq:taylorseries} \\ 
& \leq \frac{(1-(1-\alpha)^\frac{s_v}{\epsilon}) + \frac{\alpha^2}{2}(\frac{s_v^2}{\epsilon^2} -\frac{s_{v}}{\epsilon}) + O(\alpha^3)}{|Q(v) \cap \Gamma(w)|} ,  
\end{align}
where inequality~\eqref{eq:taylorseries} follows from the Taylor series approximation for the function $f(x) = (1 - x)^\frac{s_{v'}}{\epsilon}$. 
Therefore, the maximum factor by which the fraction of importance received by $v$ can increase is at most 
\begin{align}
    1 + \frac{\frac{\alpha^2}{2}(\frac{s_v^2}{\epsilon^2} - \frac{s_v}{\epsilon}) + O(\alpha^3)}{(1-(1-\alpha)^\frac{s_v}{\epsilon}))},
\end{align}
which tends to 0 as $\alpha \rightarrow 0$. 
\qed 
\end{proof}

\subsection{Non-asymptotic Sybil Resistance at the Cost of Fairness.}

Consider the following approximation to the importance transfer algorithm presented in \S\ref{sec:accforstake}.
\begin{enumerate}
\item All edges in $D_{Q(v)}[t]$ are marked ``unvisited''. 
The importance $i_u$ received by any node $u$ in $D_{Q(v)}[t]$ from $v$ is set to 0. 
\item $v$ pays an amount $I_v[t](1-\beta) \frac{\alpha s_{V[t]}}{\epsilon}$ to the root $V[t]$, i.e., $i_{V[t]} \leftarrow I_v[t](1-\beta) \frac{\alpha s_{V[t]}}{\epsilon}$.  
\item Consider any unvisited edge $(u, u') \in D_{Q(v)}[t]$ such that all incoming edges to node $u$ are visited.
Let $i_u$ be the total importance received by $u$ from $v$. 
The importance received by $u'$ from $v$ is incremented as 
\begin{align}
i_{u'} \leftarrow i_{u'} + \frac{i_u \epsilon}{\alpha s_u}  * \frac{\alpha s_{u'}}{\epsilon |\Gamma(u)|}.
\end{align}
Mark edge $(u, u')$ as visited.
\item Repeat step 3 until all edges in $D_{Q(v)}[t]$ are visited. 
\end{enumerate}
For the same DAG $D[t]$, the fraction of importance transferred by a node $v \in D[t]$ to its predecessors is slightly greater in this algorithm compared to the algorithm in \S\ref{sec:accforstake}.   
In the algorithm of \S\ref{sec:accforstake}, if $I_v[t] > 0$ for a node $v$, then its importance $I'_v[t]$ after the transfer is also guaranteed to be positive. 
However, in the algorithm above the importance of a node can become zero for a sufficiently large value of $\alpha$. 
Hence, the algorithm above is more unfair compared to the one in \S\ref{sec:accforstake}. 
However, unlike the algorithm in \S\ref{sec:accforstake} which only has asymptotic sybil resistance (Property~\ref{prop:sybil}), the algorithm above has sybil resistance for any constant $\alpha > 0$. 
The algorithm above is also more computationally efficient compared to the algorithm in \S\ref{sec:accforstake}. 
In practice, the algorithm presented above may be used with a reasonable value for $\alpha$ to guarantee that the importance of nodes at the bottom of the DAG do not go to zero. 
E.g., if it is known (or, enforced) that the total stake of any DAG $D[t]$ is at most $s$ for $0 < s < 1$, we can use $\alpha = \epsilon/(2s)$. 

\subsection{Example}

\begin{figure}[!tbp]
    \begin{subfigure}[t]{0.3\textwidth}
        \centering
        \includegraphics[width=.6\textwidth]{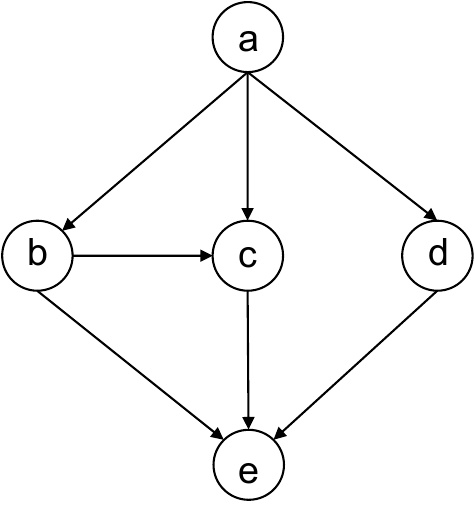}
        \caption{}
        \label{fig:dag1}
    \end{subfigure}
    \hfill
    \begin{subfigure}[t]{0.35\textwidth}
        \centering
        \includegraphics[width=.9\textwidth]{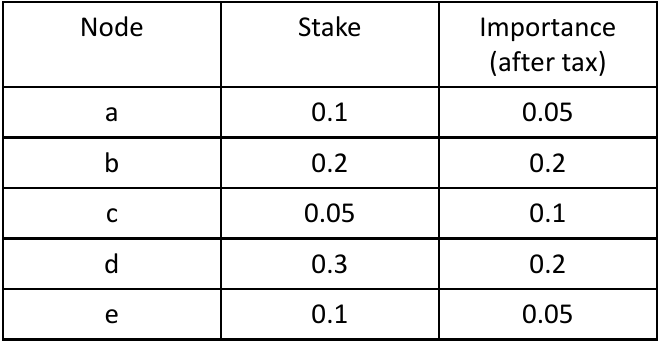}
        \caption{}
        \label{fig:dag2}
    \end{subfigure}
    \hfill
    \begin{subfigure}[t]{0.3\textwidth}
        \centering
        \includegraphics[width=.7\textwidth]{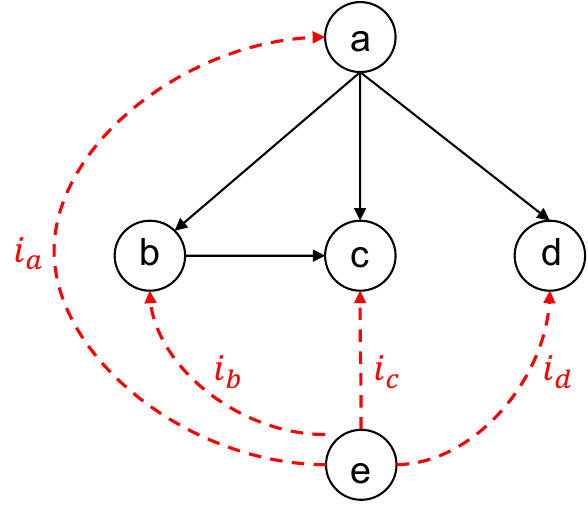}
        \caption{}
        \label{fig:dag3}
    \end{subfigure}    
    \caption{(a) Collaboration DAG $D[t]$. (b) Stake and importance of nodes initially after tax. (c) Node $e$ transfers importance to all its ancestors in $D[t]$. }
    \label{fig:dag}
\end{figure}

Consider the DAG shown in Fig.~\ref{fig:dag1}. 
The stake and the initial importance of the nodes after tax is shown in Fig.~\ref{fig:dag2}. 
Consider node $e$. 
The ancestors of node $e$ are the nodes $a, b, c$ and $d$. 
The fractions of importance that node $e$ transfers to each of those nodes are given by: 
\begin{align}
i_a &= 1-(1-\alpha)^{s_a/\epsilon} \notag \\
i_b &= i_a \frac{ (1-\alpha)^{s_a/\epsilon} (1-(1-\alpha)^{s_b/\epsilon})}{3(1-(1-\alpha)^{s_a/\epsilon})}\notag \\
i_c &= i_a \frac{(1-\alpha)^{s_a/\epsilon} (1-(1-\alpha)^{s_c/\epsilon})}{3(1-(1-\alpha)^{s_a/\epsilon})} + i_b \frac{(1-\alpha)^{s_b/\epsilon}  (1-(1-\alpha)^{s_c/\epsilon})}{2 (1-(1-\alpha)^{s_b/\epsilon})}     \notag \\
i_d &= i_a \frac{ (1-\alpha)^{s_a/\epsilon} (1-(1-\alpha)^{s_d/\epsilon})}{3 (1-(1-\alpha)^{s_a/\epsilon})}.
\end{align}
Using $\epsilon = 0.001$ and $\alpha = \epsilon/2$, we have 
$i_a = 0.049, i_b = 0.030, i_c = 0.011, i_d = 0.044$. 
Thus, node $e$ transfers a total fraction of $ 0.134$ of its importance to its ancestor nodes. 
We can similarly compute the fraction of importance transferred by each of the nodes $b, c$ and $d$ to their ancestors. 
Node $b$ transfers a fraction $0.049$ of its importance to node $a$. 
Node $c$ transfers a fraction $0.049$ to node $a$ and a fraction $0.030$ to node $b$. 
Node $d$ transfers a fraction $0.049$ to node $a$. 
The final importance of nodes $a,b,c,d$ and $e$ after the transfer is $0.077, 0.195, 0.093,0.192$ and $0.043$ respectively. 

\section{Analysis}

\subsection{Proof of Theorem~\ref{thm:preventingossification}}
\label{s:proofofthmprevossification}

\begin{proof}
Since $\sum_{(u,v): u \in S, v \in V\backslash S}w_{(u,v)}[t] < \phi \sum_{u \in S} I'_u[t]$, the set $S$ is excluded from the maximal expander subgraph $G'[t]$ (if it exists) for all $t$.
Therefore, the nodes in $S$ effectively lose $\sum_{u \in S}I_u[t] \beta$ importance in tax each round, without the importance being refunded at the end of the round. 
The maximum amount of importance available with $V\backslash S$ is $2 - \sum_{u \in S}I_u[t](1-\beta)$ after tax. 
Therefore, the maximum amount of importance that $S$ can receive in a round from $V\backslash S$ is at most  $\alpha (\sum_{u\in S}s_u )(2 - \sum_{u \in S}I_u[t](1-\beta)) / \epsilon$.  
Assuming a large initial importance for $S$, the importance of $S$ continues to decrease until the importance lost by $S$ due to tax is smaller than the importance gained by $S$ from $V\backslash S$. 
That is, 
\begin{align}
\sum_{u \in S} I_u[t] \beta \leq  \frac{\alpha}{\epsilon} (\sum_{u\in S}s_u )(2 - \sum_{u \in S}I_u[t](1-\beta)), \\
\implies \sum_{u \in S} I_u[t] \leq \frac{2\alpha \sum_{u\in S}s_u}{\beta(\epsilon - \alpha\sum_{u\in S}s_u) + \alpha \sum_{u\in S}s_u}. 
\end{align}
\qed
\end{proof}

\subsection{Proof of Lemma~\ref{lem:iclowerbound}}
\label{s:proofoflemiclowerbound}

\begin{proof}
At the beginning of each round the collector gains $\sum_{v\in V}I_v[t]\beta$ importance from taxes. 
At the end of a round the collector loses 
\begin{align}
\sum_{v\in G'[t] } \frac{ I'_v[t] (\sum_{v \in G'[t]} s_v) \beta (I_c[t] + (2 - I_c[t])\beta )}{\sum_{v \in G'[t]} I'_v[t]} = (\sum_{v \in G'[t]} s_v)  \beta (I_c[t] + (2 - I_c[t])\beta )
\end{align}
importance to $G'[t]$. 
Therefore, 
\begin{align}
I_c[t+1] = I_c[t] +  \sum_{v\in V}I_v[t]\beta - (\sum_{v \in G'[t]} s_v) \beta (I_c[t] + (2 - I_c[t])\beta ), \label{eq:importancecollectordynamics}
\end{align}
for all $t\geq 0$. 
Since importance is conserved, we have $\sum_{v \in V}I_v[t] = 2 - I_c[t]$ for all $t\geq 0$. 
Therefore, Equation~\eqref{eq:importancecollectordynamics} becomes
\begin{align}
I_c[t+1] = I_c[t] + (2 - I_c[t])\beta - (\sum_{v \in G'[t]} s_v) \beta (I_c[t] + (2 - I_c[t])\beta ).  
\end{align}
We have $I_c[t+1] \geq I_c[t]$ if $I_c[t] \leq (2-2\beta \sum_{v\in G'[t]}s_v)/(1+\sum_{v\in G'[t]}s_v(1-\beta))$. 
Since, 
\begin{align}
\frac{2-2\beta \sum_{v\in G'[t]}s_v}{1+\sum_{v\in G'[t]}s_v(1-\beta)} \geq  \frac{2-2\beta}{2-\beta},  
\end{align}
if $I_c[t]$ drops below $(2-2\beta)/(2-\beta)$, $I_c[t+1]$ is guaranteed to be bigger than $I_c[t]$. 
If $I_c[t]$ is slightly above the threshold of $(2-2\beta)/(2-\beta)$, $I_c[t+1]$ can 
undershoot below the threshold by a maximum of $2\beta$ after which it must increase. 
Therefore, we conclude that 
\begin{align}
I_c[t] \geq     \frac{2-2\beta}{2-\beta} - 2\beta \approx 1,
\end{align}
for all time $t>0$. 
\qed 
\end{proof}

\subsection{Proof of Theorem~\ref{thm:rewardingdecentralization}}
\label{s:proofofthemrewardingdecent}

\begin{proof}
Consider the subgraph $G[t](S\cup S')$ of $G[t]$ spanned by $S \cup S'$ for any $s' \subseteq V \backslash S$. 
The conductance of the cut $S'$ in the subgraph is at most 
\begin{align}
\frac{\phi_\mathrm{out}\min(\sum_{v\in S}I'_v[t], \sum_{v\in S' }I'_v[t])}{\min(\sum_{v\in S}I'_v[t], \sum_{v\in S'}I'_v[t])} < \phi. 
\end{align}
Therefore, $G'[t](S\cup S')$ cannot be a $\phi$-expander for any $S' \subseteq V \backslash S$.
On the other hand, the subgraph spanned by $S$ is an expander with a total stake $> 0.5$. 
Hence, $G'[t] = G[t](S)$. 

At time $t$, $S$ loses $\sum_{v\in S}I_v[t]\beta$ importance to the collector. 
It gains $(\sum_{v\in S}s_v) \beta (I_c[t] + (2-I_c[t])\beta)$ importance back as a tax refund. 
It also loses $\sum_{(u,v): u\in S, v\in V\backslash S}I_{(u,v)}[t]$ importance to $V \backslash S$ through collaborations (if any). 
Hence, 
\begin{align}
\sum_{v\in S}I_v[t+1] = \sum_{v \in S} I_v[t](1-\beta) + (\sum_{v\in S}s_v) \beta (I_c[t] + (2-I_c[t])\beta) - \sum_{(u,v): u\in S, v\in V\backslash S}I_{(u,v)}[t] \notag \\
\geq \sum_{v \in S} I_v[t](1-\beta) + (\sum_{v\in S}s_v) \frac{\beta(2 - 4\beta + 6\beta^2 -2\beta^3)}{2-\beta} - \sum_{(u,v): u\in S, v\in V\backslash S}I_{(u,v)}[t], \label{eq:ilowerbound}
\end{align}
where the second inquality follows from Lemma~\ref{lem:iclowerbound}.  
From inequality~\eqref{eq:ilowerbound} above, we have 
\begin{align}
\sum_{v\in S} w_v[t+1] - \alpha (1-\alpha)^{t+1} I_v[0] \geq \sum_{v\in S} w_v[t] (1-\beta) \notag \\
+ (\sum_{v\in S}s_v) \frac{\beta(2 - 4\beta + 6\beta^2 -2\beta^3)}{2-\beta}(1-(1-\alpha)^{t+1}) - \sum_{(u,v):u\in S, v\in V\backslash S} w_{(u,v)}[t] \\ 
\geq \sum_{v\in S} w_v[t] (1-\beta) \notag \\ 
+ (\sum_{v\in S}s_v) \frac{\beta(2 - 4\beta + 6\beta^2 -2\beta^3)}{2-\beta}(1-(1-\alpha)^{t+1}) - \phi_\mathrm{out}\sum_{v\in S} I'_v[t] \label{eq:middle1}\\
\geq \sum_{v\in S} w_v[t] (1-\beta) \notag \\
+ (\sum_{v\in S}s_v) \frac{\beta(2 - 4\beta + 6\beta^2 -2\beta^3)}{2-\beta}(1-(1-\alpha)^{t+1}) - \phi_\mathrm{out}, \label{eq:middle2}
\end{align}
where inequality~\eqref{eq:middle1} follows due to the weak collaboration between $S$ and $V\backslash S$ and inequality~\eqref{eq:middle2} follows from Lemma~\ref{lem:iclowerbound}. 
Taking $\liminf$ on both sides of the inequality above and using the property that $\liminf_{t\rightarrow \infty}(x[t] + y[t]) \geq \liminf_{t\rightarrow\infty}x[t] + \liminf_{t\rightarrow \infty}y[t]$ for any two sequences $x[t]$ and $[t]$, we have 
\begin{align}
\liminf_{t\rightarrow\infty} \sum_{v\in S}  w_v[t] \geq (1-\beta)\liminf_{t\rightarrow\infty} \sum_{v\in S} w_v[t] + (\sum_{v\in S}s_v) \frac{\beta(2 - 4\beta + 6\beta^2 -2\beta^3)}{2-\beta} - \phi_\mathrm{out} \notag \\
\implies \liminf_{t\rightarrow\infty} \sum_{v\in S} w_v[t] \geq (\sum_{v\in S}s_v) \frac{(2 - 4\beta + 6\beta^2 -2\beta^3)}{2-\beta} - \frac{\phi_\mathrm{out}}{\beta}. 
\end{align}
\qed 
\end{proof}

\end{document}